\documentclass[12pt]{article}
\addtolength{\textwidth}{1in}
\addtolength{\oddsidemargin}{-0.5in}
\addtolength{\textheight}{1in}
\addtolength{\topmargin}{-0.55in}

\usepackage[margin=1in]{geometry}
\usepackage{amsmath}
\usepackage{graphicx}
\usepackage{hyperref}
\usepackage{enumerate}
\usepackage{natbib}
\usepackage{url} 
\usepackage{mathtools}
\usepackage{empheq}
\usepackage{amsfonts}
\usepackage{amsgen,amsmath,amstext,amsbsy,amsopn,amssymb,amsthm,stmaryrd,bbm,subfigure}
\usepackage{cprotect}
\usepackage{natbib}
\usepackage[dvipsnames]{xcolor}
\usepackage{comment}
\usepackage{enumitem}
\usepackage{multirow,booktabs,makecell}
\usepackage[affil-it]{authblk}
\usepackage{algorithmic}
\usepackage[ruled,vlined,longend]{algorithm2e}

\newcommand{\RNum}[1]{\uppercase\expandafter{\romannumeral #1\relax}}

\newcommand{\vertiii}[1]{{\left\vert\kern-0.25ex\left\vert\kern-0.25ex\left\vert #1 
		\right\vert\kern-0.25ex\right\vert\kern-0.25ex\right\vert}}

\newcommand{\bpi}{\boldsymbol{\pi}}
\newcommand{\bZ}{\mathbf{Z}}
\newcommand{\bI}{\mathbf{I}}
\newcommand{\bS}{\mathbf{S}}
\newcommand{\bW}{\mathbf{W}}
\newcommand{\bX}{\mathbf{X}}
\newcommand{\bY}{\mathbf{Y}}
\newcommand{\by}{\mathbf{y}}
\newcommand{\bx}{\mathbf{x}}
\newcommand{\bm}{\mathbf{m}}
\newcommand{\bth}{\boldsymbol{\theta}}
\newcommand{\btau}{\boldsymbol{\tau}}
\newcommand{\bet}{\boldsymbol{\beta}}
\newcommand{\bal}{\boldsymbol{\alpha}}
\newcommand{\bLam}{\mathbf{\Lambda}}
\newcommand{\bmu}{\boldsymbol{\mu}}
\newtheorem{thm}{Theorem}
\newtheorem{lem}{Lemma}
\newtheorem{rmk}{Remark}
\newtheorem{cor}{Corollary}

\newtheorem{assump}{Assumption}

\begin{document}
\title{\vspace{-1cm} Bend to Mend: Toward Trustworthy Variational Bayes with Valid Uncertainty Quantification}

\author[1]{Jiaming Liu}
\author[1]{Meng Li\thanks{Corresponding author: meng@rice.edu}}

\affil[1]{Department of Statistics, Rice University}

\date{}
\maketitle
\begin{abstract}
Variational Bayes (VB) is a popular and computationally efficient method to approximate the posterior distribution in Bayesian inference, especially when the exact posterior is analytically intractable and sampling-based approaches are computationally prohibitive. While VB often yields accurate point estimates, its uncertainty quantification (UQ) is known to be unreliable. For example, credible intervals derived from VB posteriors tend to exhibit undercoverage, failing to achieve nominal frequentist coverage probabilities. In this article, we address this challenge by proposing Trustworthy Variational Bayes (TVB), a method to recalibrate the UQ of broad classes of VB procedures. Our approach follows a bend-to-mend strategy: we intentionally misspecify the likelihood (\textit{bend}) to correct VB's flawed UQ (\textit{mend}).  In particular, we first relax VB by 
building on a recent fractional VB method indexed by a fraction $\omega$, and then identify the optimal fraction parameter using conformal techniques such as sample splitting and bootstrapping. This yields recalibrated UQ for any given parameter of interest. On the theoretical front, we establish that the calibrated credible intervals achieve asymptotically correct frequentist coverage for a given parameter of interest; this, to the best of our knowledge, is the first such theoretical guarantee for VB. On the practical front, we introduce the ``TVB table'', which enables (1) massive parallelization and remains agnostic to the parameter of interest during its construction, and (2) efficient identification of the optimal fraction parameter for any specified parameter of interest. The proposed method is illustrated via Gaussian mixture models and Bayesian mixture linear regression models, and numerical experiments demonstrate that the TVB method outperforms standard VB and achieves normal frequentist coverage in finite samples. A real data application is also discussed. 
\end{abstract}

\noindent
{\it Keywords:}  Variational Bayes, fractional posterior, sample splitting bootstrap, credible interval calibration

\newpage

\section{Introduction}
\label{intro}
 
Variational Bayes (VB) has emerged as a powerful method for Bayesian inference particularly when the exact posterior distribution is not available and Markov Chain Monte Carlo (MCMC) becomes computationally too intensive. VB reformulates Bayesian inference as an optimization problem \citep{wainwright2008graphical, blei2017variational, hoffman2013stochastic, tran2021practical} and can achieve results orders of magnitude faster than traditional MCMC methods, making it particularly attractive for applications in high-dimensional settings and complex modeling. In addition, we note that numerical convergence in VB is often easier to diagnose than the distributional convergence encountered in MCMC.

Uncertainty quantification (UQ) enables researchers to assess the reliability of their estimates and make informed decisions, and is often viewed as a key advantage of Bayesian inference. However, it is well known that the UQ yielded by VB might be problematic and lack trustworthiness. This limitation stems from the inherent bias introduced by the variational approximation and the potential underestimation of posterior variance \citep{wang2005inadequacy, bishop2006pattern, wang2019frequentist}.
While VB's limitation in UQ is well documented and acknowledged, improving its UQ is challenging. 
Existing work includes improved variance and covariance estimates \citep{giordano2018covariances}; \cite{syring2019calibrating} proposed a recalibration approach for general Gibbs posteriors, but their focus is not on VB and theoretical guarantees were not established. We are not aware of any existing work that achieves recalibrated interval estimates for VB with theoretical guarantees. The current article contributes to filling this gap. 

In this paper, we aim to recalibrate VB such that the credible intervals attain the nominal coverage level, leading to new approximate Bayesian methods that build on the computational advantage of the traditional VB while maintaining guaranteed frequentist properties on UQ. We propose a strategy called ``bend to mend'' to achieve trustworthy VB, where we first consider a broader class of approximations and then choose one from this broader class for recalibration. In particular, we use recently developed fractional VB which adjusts the scale of the posterior distribution via a fractional parameter $\omega$~\citep{yang2020alpha}. Such fractional VB approximates the fractional posterior distribution, in which we intentionally misspecify the likelihood function by raising it to a power. Then we select this $\omega$ parameter such that frequentist coverage is attained. Our approach differs from the General Posterior Calibration method of \cite{syring2019calibrating} both algorithmically and theoretically, with a specific focus on variational Bayes. In particular, the fractional VB we employ extends existing tempered posteriors or the Gibbs posterior considered in \cite{syring2019calibrating} when the models involve latent variables. More importantly, we propose to use a simple but powerful sample splitting approach \citep{10.1214/18-AOS1784, wasserman2020universal} and the flexibility of bootstrap to develop a computationally efficient approach to choose $\omega$. These differences turn out to be crucial for theoretical developments and finite-sample performance.

 The employed fractional VB can typically be implemented by minor modifications of existing algorithms to the traditional VB, facilitating algorithmic development. Our theory shows that fractional VB with our selected $\omega$ achieves guaranteed nominal frequentist coverage, leading to a class of approximate Bayesian methods building on VB yet with justified UQ that is not available for the traditional VB. To the best of our knowledge, this is the first such result that provides a theoretical guarantee when recalibrating the UQ produced by variational Bayesian methods. 

The rest of this paper is organized as follows. Section \ref{method}  reviews variational Bayes and discusses its limitation in uncertainty quantification. Section~\ref{ss_boot} presents our proposed method utilizing fractional VB and a novel sample splitting-based calibration method that yields trustworthy VB. Section \ref{theory} provides theoretical guarantees that our method successfully calibrates the VB posterior, ensuring its credible intervals attain the nominal frequentist coverage. Section \ref{example} examines two canonical examples: the Bayesian Gaussian mixture model and the Bayesian mixture linear regression model, demonstrating their VB posterior approximations under the fractional posterior framework. In Section \ref{empirical}, we present a comparative study of three approaches: standard VB and two strategies of implementing our proposed trustworthy VB method, using the examples from Section \ref{example}. Section~\ref{real_data} provides a real-world data application. Section \ref{discussion} concludes the paper. Proofs are deferred to Section~\ref{proof}.

\section{Preliminaries on Variational Bayes}
 
\label{method}
 
This section reviews variational Bayes and its limitation in uncertainty quantification. We first introduce the model setup used throughout this article. Suppose we have $n$ observations $\bX = (X_{1}, \cdots, X_{n}) \overset{i.i.d.}{\sim} P_{\bth}$ with a $p$-dimensional unknown parameter $\bth = (\theta_{1}, \cdots, \theta_{p})^{\prime} \in \Theta$, where $\Theta \subset \mathbb{R}^{p}$ is the parameter space. Given the prior distribution $\pi(\bth)$ and likelihood function $p(\bX | \bth) = \prod\limits_{i = 1}^{n}p(X_{i} | \bth)$, the posterior distribution is obtained via $p(\bth | \bX) = \dfrac{\pi(\bth) p(\bX | \bth)}{\int_{\Theta}\pi(\bth) p(\bX | \bth) d\bth}$. Suppose we are interested in inference on $h(\bth)$, where $h(\cdot): \Theta \rightarrow \mathbb{R}$ is a continuous functional of $\bth$, such as a specific component or a transformation involving all components. In practice, users might be interested in one such $h(\bth)$ or multiple $h(\bth)$.

\subsection{Basic Variational Bayes}
\label{vb_meanfield}

When the posterior distribution $p(\bth | \bX)$ lacks tractability, inference on $h(\bth)$ often proceeds using sampling-based methods, such as MCMC \citep{hastings1970monte, gelfand1990sampling}. However, MCMC implementation can be computationally intensive, particularly when the model is complicated or high-dimensional\citep{hoffman2014no, betancourt2017conceptual}, motivating the need for more efficient methods of posterior approximation. Variational inference, or variational Bayes (VB), provides an alternative approach that is often computationally much more efficient to approximate the posterior distribution \citep{wainwright2008graphical}. 

VB seeks to find a function within a pre-specified space of distribution, denoted by $\mathcal{Q}$, that is closest to the true posterior distribution $p(\bth | \bX)$ under certain divergence, commonly the Kullback-Leibler (KL) divergence. Formally speaking, the variational approximation $q^{*}(\cdot)$ is obtained by
\begin{equation}\label{vb}
    q^{*} = \underset{q \in \mathcal{Q}}{\arg\min} \; \mathrm{KL}\left(q(\cdot) \parallel p(\cdot | \bX)\right)
    = \underset{q \in \mathcal{Q}}{\arg\min} \; \int_{\Theta} q(\bth) \log \left(\dfrac{q(\bth)}{p(\bth | \bX)}\right) d\bth. 
\end{equation}
In practice, VB turns to solving an equivalent optimization problem that maximizes the evidence lower bound (ELBO):
\begin{equation}\label{elbo}
    q^{*} = \underset{q \in \mathcal{Q}}{\arg\max} \; \underbrace{\int_{\Theta} q(\bth) \log \left(\dfrac{p(\bth, \bX)}{q(\bth)}\right) d\bth}_{\mathrm{ELBO}(q)},
\end{equation}
thereby bypassing the analytical intractability of the true posterior distribution \( p(\bth \mid \bX) \) in the original formulation~\eqref{vb}.

A commonly used approximation family for $\mathcal{Q}$ is the so-called mean-field variational family \citep{blei2017variational, wang2019frequentist, han2019statistical}, which assumes $\mathcal{Q} = \left\{q : q(\bth) = \prod\limits_{i = 1}^{p}q_{\theta_{i}}(\theta_{i})\right\}$. Other variational Bayes methods beyond the mean-field family are also available in the literature \citep{jaakkola1998improving, hoffman2015structured, tran2015copula, ranganath2016hierarchical}. For a comprehensive treatment of variational inference, we refer readers to an excellent review article \cite{blei2017variational}.

\subsection{VB's limitation in Uncertainty Quantification (UQ)}
\label{limit}

After obtaining the variational posterior approximation $q^{*}$, we can construct a $100(1 - \alpha)\%$ credible interval for a parameter of interest $h(\bth)$, denoted by $C_{q^{*}}(\bX;\alpha)$. However, a well-documented limitation of variational Bayes is its tendency to underestimate posterior covariance, producing approximations that are overly concentrated compared to the true posterior distribution $p(\cdot | \bX)$ \citep{wang2005inadequacy, bishop2006pattern, wang2019frequentist}. For example, under the mean-field approximations, the adopted independence assumption breaks posterior dependence structures. In particular, Theorem 5 and Corollary 9 in \cite{wang2019frequentist} show that the mean-field variational approximation $q^{*}$ converges to a Gaussian distribution but with a diagonal precision matrix, thereby discarding posterior correlations and underestimating posterior variances. 

This underestimation of the covariance matrix results in $100(1-\alpha)\%$ credible intervals constructed from $q^{*}$ that may not achieve the nominal frequentist coverage probability. 
Suppose $Q_{\bth}^{*}(\cdot)$ is the probability measure of $q^{*}(\cdot)$ and $P_{\bth}(\cdot)$ is the probability measure of $p(\cdot | \bX)$. We have $Q_{\bth}^{*}\left(\bth \in C_{q^{*}}(\bX;\alpha)\right) = 1 - \alpha$, but it might well be the case that $P_{\bth}\left(\bth \in C_{q^{*}}(\bX;\alpha)\right) < 1 - \alpha$ even in the limit. There is literature on correcting the variance estimate and improving the robustness of the traditional VB, such as \cite{giordano2018covariances}. However, statistical guarantees for interval estimation derived by VB require more than the corrected variance estimate. In contrast, under conditions, by Bernstein–von Mises theorem, the $100(1 - \alpha)\%$ credible interval constructed by the true posterior distribution will have correct frequentist coverage asymptotically \citep{van2000asymptotic, ghosal2017fundamentals}.

Therefore, with the computational advantage, VB sacrifices its trustworthiness in UQ that could be maintained by the true posterior distribution, in the sense of frequentist coverage. Undercoverage of $C_{q^{*}}(\bX;\alpha)$ is particularly concerning as it points to misleading statistical significance, and in this paper we will recalibrate it to the nominal frequentist coverage with theoretical guarantees, while maintaining a similar computational architecture of the VB.

\section{Trustworthy VB via Fractional Posterior Calibration}
\label{ss_boot}

To solve the undercoverage issue discussed in Section \ref{limit}, we propose a calibrating method based on fractional VB and sample splitting bootstrap. To begin with, recall that in classical Bayesian inference, we have $p(\bth | \bX) \propto \pi(\bth)p(\bX | \bth)$, where $p(\bX | \bth)$ is the likelihood function. For fractional posterior, we replace the likelihood function with $\exp\left(-\omega n R(\bth)\right)$, where $R(\bth) = -\dfrac{1}{n}\log p(\bX | \bth)$ is the negative log-likelihood function. This is a special case of Gibbs posterior where $R(\bth)$ is any empirical risk function \citep{alquier2016properties, syring2023gibbs, alquier2024user}. The corresponding posterior distribution is 
\begin{equation}
    p_{\omega}(\bth | \bX) = \dfrac{\pi(\bth)p^{\omega}(\bX | \bth)}{\int_{\Theta}\pi(\bth)p^{\omega}(\bX | \bth) d\bth};
\end{equation}
this posterior distribution is called fractional posterior and its theoretical properties have been studied in \cite{10.1214/18-AOS1712}.  Consider the more general case with latent variables $\bZ$, which is common in the applications of VB. Then the fractional posterior distribution is
\begin{equation}\label{gibbs_post}
    p_{\omega}(\bth, \bZ | \bX) = \dfrac{\pi(\bth)p^{\omega}(\bX | \bth, \bZ)p^{\omega}(\bZ | \bth)}{\int\pi(\bth)p^{\omega}(\bX | \bth, \bZ)p^{\omega}(\bZ | \bth) d\bth d\bZ}. 
\end{equation} 
The variational approximation $q_{\omega}^{*}(\bth)$ of the fractional posterior distribution is referred to as fractional VB \citep{yang2020alpha}, which extends tempered posterior or Gibbs posterior when there are latent variables.  With smaller $\omega \in (0, 1]$, the posterior distribution becomes more ``flattened'', which enlarges the credible interval and thus increases the frequentist coverage. For a special case when $\omega = 1$, the posterior distribution degenerates to the classical VB posterior distribution. 

Recall that $h(\cdot): \Theta \rightarrow \mathbb{R}$ is a continuous mapping of $\bth$ that defines our parameter of interest $h(\bth)$. Let $C_{q^{*}_{\omega}}(\bX; \alpha)$ be the $100(1 - \alpha)\%$ credible interval of $h(\bth)$ constructed by fractional variational posterior $q^{*}_{\omega}$, and $c_{q^{*}}(\omega; \alpha)$ be its frequentist coverage probability. The continuous fraction parameter $\omega$ induces credible intervals of varying width and, consequently, varying coverage. Our aim is to select an ``optimal'' $\omega$ such that the $100(1-\alpha)\%$ credible interval for $h(\bth)$ attains its nominal frequentist coverage $1-\alpha$. A discussion of related calibration literature can be found in Section~\ref{sec:rel_literature}. More specifically, we propose a sample splitting bootstrap-based procedure, called \textbf{t}rustworthy \textbf{v}ariational \textbf{B}ayes (TVB), that proceeds as follows: We partition the full dataset $\bX$ into $\bX_{1}$ and $\bX_{2}$, where $\bX_{1}$ serves to construct a surrogate for the true $h(\bth)$ at each $\omega \in (0, 1]$, and $\bX_{2}$ is used to generate bootstrap samples and estimate the empirical coverage for a given $\omega$. For a fixed $\omega$, we generate $B$ bootstrap datasets $\bX_{2}^{(1)}, \ldots, \bX_{2}^{(B)}$ by resampling from $\bX_{2}$ and compute the corresponding credible intervals $C_{q^{*}_{\omega}}(\bX_{2}^{(b)}; \alpha)$ for $b = 1, \ldots, B$, along with the posterior mean $\hat{\bth}_{\omega}^{\bX_{1}}$ obtained from the first split $\bX_{1}$. The frequentist coverage of $h(\bth)$ for a $100(1 - \alpha)\%$ credible interval under the $\omega$-VB procedure is then estimated via the bootstrap empirical coverage
\begin{equation}\label{eq:c^B_q}
    \hat{c}^{B}_{q^{*}}(\omega; \alpha) = \frac{\sum\limits_{b = 1}^{B}\mathbbm{1}\left\{h(\hat{\bth}_{\omega}^{\bX_{1}}) \in C_{q^{*}_{\omega}}(\bX_{2}^{(b)}; \alpha)\right\}}{B}.
\end{equation}
To achieve the nominal $100(1-\alpha)\%$ coverage, we would ideally select $\omega_{0}$ such that $\hat{c}^{B}_{q^{*}}(\omega; \alpha) = 1-\alpha$. In practice, we approximate this root by choosing
\begin{equation}\label{eq:optimize}
    \omega_{0} = \underset{\omega \in \Omega}{\arg\min} \left|\hat{c}^{B}_{q^{*}}(\omega; \alpha) - (1 - \alpha) \right|,
\end{equation}
where $\Omega \subseteq (0, 1]$ will be specified in the following Section~\ref{sec:seq_tvb} and Section~\ref{sec:grid_tvb} respectively. With the selected $\omega_{0}$, we use the corresponding fractional VB approximation to construct the $100(1 - \alpha)\%$ credible interval that attains the target coverage. 

Next we propose two approaches for optimizing $\omega_{0}$ in \eqref{eq:optimize}: a sequential updating method via stochastic approximation \citep{robbins1951stochastic}, and a grid-point search strategy that provides enhanced computational efficiency and flexibility.

\subsection{Strategy I: Sequential Update TVB}\label{sec:seq_tvb}
The sequential updating strategy is based on stochastic approximation: Starting with $\omega^{(0)} = 1$, and at $(k+1)$th iteration, we can update $\omega$ by \begin{equation}\label{update}
     \omega^{(k+1)} = \omega^{(k)} + \kappa_{k}\left(\hat{c}^{B}_{q^{*}}(\omega^{(k)}; \alpha) - (1 - \alpha)\right),
\end{equation}
where $\kappa_{k}$ is a user-defined step size satisfying $\sum \kappa_{k} = \infty, \sum \kappa_{k}^{2} < \infty$. The corresponding procedure is summarized in Algorithm \ref{ssb_cal}, with the use of reparametrization $\eta = \log\omega$ for numerical stability. For the sequential update of TVB, we assume $\Omega = [\delta, 1]$ for some small $\delta > 0$.
\begin{algorithm}[h!]
\caption{Trustworthy Variational Posterior Calibration with Sequential Update}\label{ssb_cal}
\begin{algorithmic}[1]

\REQUIRE Observation data $\bX$, target coverage $100(1 - \alpha)\%$, number of bootstrap samples $B$, mapping function $h(\cdot)$, step size $\kappa_{k}$, maximum number of iterations $\mathrm{max\_iter}$, tolerance $\epsilon > 0$. Initialization: $\eta^{(0)} = 0$, coverage = FALSE, $k = 0$

\WHILE{$k < \mathrm{max\_iter}$ \textbf{and} \textbf{not} coverage}
    \STATE Smoothing parameter $\omega^{(k)} = \exp(\eta^{(k)})$
    \STATE Randomly Split $\bX$ into $\bX_1$, $\bX_2$ with the same number of samples
    \STATE Use $\bX_{1}$ to obtain variational approximation $q^{*}_{\omega^{(k)}}(\bth)$ of fractional posterior \eqref{gibbs_post} with corresponding posterior mean $\hat{\bth}_{\omega^{(k)}}^{\bX_{1}}$
    \STATE Implement non-parametric bootstrap on $\bX_{2}$ and get $B$ bootstrap samples $\bX_{2}^{(1)}, \cdots, \bX_{2}^{(B)}$
    
    \FOR{$b = 1$ \TO $B$}
        \STATE Obtain mean-field variation approximation $q^{(b)}_{\omega^{(k)}}(\bth)$ using $\bX_{2}^{(b)}$
        \STATE Construct $100(1 - \alpha)\%$ credible interval $C_{q^{(b)}_{\omega^{(k)}}}(\bX_{2}^{(b)}; \alpha)$
    \ENDFOR
    
    \STATE Obtain empirical bootstrap coverage probability $\hat{c}^{B}_{q^{*}}(\omega^{(k)}; \alpha)$ by
    \begin{equation}\label{boot_coverage}
        \hat{c}^{B}_{q^{*}}(\omega^{(k)}; \alpha) = \dfrac{\sum\limits_{b = 1}^{B}\mathbbm{1}\left\{h(\hat{\bth}_{\omega^{(k)}}^{\bX_{1}}) \in C_{q^{(b)}_{\omega^{(k)}}}(\bX_{2}^{(b)}; \alpha)\right\}}{B}
    \end{equation}
    
    \IF{$\hat{c}^{B}_{q^{*}}(\omega^{(k)}; \alpha) > 1 - \alpha$ or $(1 - \alpha) - \hat{c}^{B}_{q^{*}}(\omega^{(k)}; \alpha) < \epsilon$}
        \STATE coverage = TRUE
    \ELSE
        \STATE Update $\eta$ by $\eta^{(k+1)} = \eta^{(k)} + \kappa_{k}\left(\hat{c}^{B}_{q^{*}}(\omega^{(k)}; \alpha) - (1 - \alpha)\right)$
        \STATE $k = k + 1$
    \ENDIF
\ENDWHILE

\STATE Return $\hat{\omega}_{0} = \omega^{(k)}$
\end{algorithmic}
\end{algorithm}

\subsection{Strategy II: Grid-Search TVB}\label{sec:grid_tvb}
The sequential updating TVB approach discussed in Section \ref{sec:seq_tvb} presents two significant limitations. First, the method's iterative nature, requiring multiple runs of the variational Bayes algorithm at each step, results in substantial computational overhead when determining $\omega_{0}$. Second, and perhaps more critically, the selected $\omega_{0}$ is inherently ``parameter-specific''; that is, for any new component or transformation of $\bth$ (such as $h_1(\bth)$), Algorithm \ref{ssb_cal} must be executed anew.

We address both limitations through a grid-search strategy. Consider a grid of values $\boldsymbol{\omega} = (\omega_{1}, \cdots, \omega_{m})^{\prime}$, where $m$ denotes the number of grid points satisfying $0 < \omega_{1} < \cdots < \omega_{m} < 1$. Our procedure has two phases: table/dictionary construction and table/dictionary utilization. In the first phase, we run the VB algorithm in parallel for each $\omega_{i}$, $i \in [m]$, and store the resulting variational posteriors in a dictionary-like structure. This ``TVB table'' serves as a reusable reference for subsequent queries. In the second phase, for any given functional $h(\bth)$, we evaluate the criterion in \eqref{eq:optimize} over the grid and select the $\omega$ that minimizes it. The resulting implementation is summarized in Algorithm~\ref{ssb_grid}. For grid-search TVB, the parameter space $\Omega$ in equation \eqref{eq:optimize} corresponds to $\boldsymbol{\omega}$.

The idea of precomputing multiple, parallel runs of an algorithm over a grid and packing the results into a table that enables efficient, parameter-agnostic look-up is reminiscent of cycle spinning in multiscale image analysis, including wavelet-based methods \citep{li2015fast, coifman1995translation}, where a translation-invariant table is constructed to encode information across all shifts. This analogy motivates our terminology of a “TVB table” for the grid-search implementation.

\begin{algorithm}[h!]
\caption{Trustworthy Variational Posterior Calibration with Grid-Search Update}\label{ssb_grid}
\begin{algorithmic}[1]

\REQUIRE Observation data $\bX$, target coverage $100(1 - \alpha)\%$, number of bootstrap samples $B$, mapping function $h(\cdot)$, grid points $\boldsymbol{\omega} = (\omega_{1}, \cdots, \omega_{m})^{\prime}$

\STATE \textbf{Dictionary Construction:}

\FOR{$k = 1, 2, \cdots, m$}
    \STATE Randomly Split $\bX$ into $\bX_1$, $\bX_2$ with the same number of samples
    \STATE Use $\bX_{1}$ to obtain variational approximation $q^{*}_{\omega_{k}}(\bth)$ of fractional posterior \eqref{gibbs_post} with corresponding posterior mean $\hat{\bth}_{\omega_{k}}^{\bX_{1}}$
    \STATE Implement non-parametric bootstrap on $\bX_{2}$ and get $B$ bootstrap samples $\bX_{2}^{(1)}, \cdots, \bX_{2}^{(B)}$
    
    \FOR{$b = 1$ \TO $B$}
        \STATE Obtain mean-field variation approximation $q^{(b)}_{\omega_{k}}(\bth)$ using $\bX_{2}^{(b)}$
    \ENDFOR

    \STATE Save the VB estimation with bootstrap estimation w.r.t. $\omega_{k}$ in a dictionary $\mathfrak{D}$
\ENDFOR

\STATE \textbf{Dictionary Utilization:}

\FOR{$k = 1, 2, \cdots, m$}
    \FOR{$b = 1$ \TO $B$}
        \STATE Using $\mathfrak{D}$, construct $100(1 - \alpha)\%$ credible interval $C_{q^{(b)}_{\omega_{k}}}(\bX_{2}^{(b)}; \alpha)$ of $h(\bth)$
        \STATE Obtain empirical bootstrap coverage probability $\hat{c}^{B}_{q^{*}}(\omega_{k}; \alpha)$ by
        \begin{equation}
            \hat{c}^{B}_{q^{*}}(\omega_{k}; \alpha) = \dfrac{\sum\limits_{b = 1}^{B}\mathbbm{1}\left\{h(\hat{\bth}_{\omega_{k}}^{\bX_{1}}) \in C_{q^{(b)}_{\omega_{k}}}(\bX_{2}^{(b)}; \alpha)\right\}}{B}
        \end{equation}
    \ENDFOR

    \STATE Compare all $\hat{c}^{B}_{q^{*}}(\omega_{k}; \alpha)$, find $k^{\prime}$ such that \eqref{eq:optimize} is minimized
\ENDFOR

\STATE Return $\hat{\omega}_{0} = \omega_{k^{\prime}}$
\end{algorithmic}
\end{algorithm}

Our numerical experiments in Section \ref{empirical} demonstrate that both strategies achieve effective coverage calibration. Nevertheless, we advocate for the grid-search approach because it is parameter-agnostic, applies broadly, and readily supports parallel computation.

\subsection{Related Literature}
\label{sec:rel_literature}
Although not specifically focused on VB, \cite{syring2019calibrating} proposes the general posterior calibration (GPC) algorithm to update the tuning parameter $\omega$ in the Gibbs posterior, by comparing the bootstrap empirical coverage with the target $100(1 - \alpha)\%$ coverage. We extend the GPC algorithm in the context of VB on two fronts. First, we build on the recent fractional VB method developed in \citep{yang2020alpha}, which differs from the Gibbs posterior considered in \cite{syring2019calibrating} particularly in the presence of latent variables. We have found that fractional VB is critical to ensure the validity of recalibrated credible intervals. Also fractional VB has appealing theoretical properties, including its posterior consistency for any $\omega \in (0, 1]$, which ensures certain assumptions required to establish our theory (see the next section and remark after Assumption \ref{assump_consistency}). Second, since GPC treats the empirical bootstrap coverage $\hat{c}^{B}_{q^{*}}(\omega; \alpha)$ as an approximation of the true frequentist coverage $c_{q^{*}}(\omega; \alpha)$, when using the full dataset $\bX$ to implement both original estimation and bootstrap estimation, the corresponding $\hat{c}^{B}_{q^{*}}(\omega; \alpha)$ may suffer from ``overconfidence'' as $h(\hat{\bth}_{\omega})$ (where $\hat{\bth}_{\omega}$ denotes the $\omega$-VB posterior mean computed on the full dataset $\bX$, which differs from $\hat{\bth}_{\omega}^{\bX_{1}}$) is not independent of $C_{q^{*}_{\omega}}(\bX^{(b)}; \alpha)$. Specifically, the optimal $\omega$ obtained through the GPC algorithm may not be sufficiently small to achieve proper coverage probability calibration for the credible set $C_{q^{*}_{\omega}}(\bX;\alpha)$. Motivated by split conformal prediction \citep{lei2018distribution} and bootstrap with split sample \citep{10.1214/18-AOS1784}, we propose the sample splitting bootstrap procedure integrated with GPC algorithm to improve the selection of $\omega$, with corresponding theoretical guarantees. 

Existing procedures, such as that of~\cite{syring2019calibrating}, lack formal guarantees for the resulting uncertainty quantification, which is particularly concerning because coverage assessment is typically more demanding than assessing point estimation accuracy. In contrast, as we show later, our theoretical results both establish asymptotically correct frequentist coverage for calibrated fractional VB credible intervals and clarify the conditions required of the underlying VB approximation. Informally, they indicate how far one can ``bend'' a VB posterior, and which variational approximations are amenable to such ``mending’’, which carries direct practical relevance.

\section{Theoretical Results}
\label{theory}

In this section we establish theoretical results to verify the validity of the proposed method, i.e., the selected $\hat{\omega}_{0}$ in~\eqref{eq:optimize} can recalibrate the credible interval for the corresponding variational posterior distribution, such that $c_{q^{*}}(\hat{\omega}_{0}; \alpha)$ converges to $1 - \alpha$. Recall that the parameter of interest we wish to recalibrate credible intervals for is given by a general continuous mapping $h(\cdot): \Theta \rightarrow \mathbb{R}$.

We will first show that the bootstrap coverage probability $\hat{c}_{q^{*}}^{B}(\omega; \alpha)$ converges in probability to the true coverage probability $c_{q^{*}}(\omega; \alpha)$ for all $\omega$. Our theoretical results are based on empirical process theory, especially the conditional multiplier central limit theorem and functional delta-method~\citep{wellner2013weak}. The following conditions are assumed: 

\begin{assump}[Consistency]
\label{assump_consistency}
    Let $\hat{\bth}_{\omega} = \int_{\Theta}\bth q^{*}_{\omega}(\bth)d\bth$ be the variational approximation posterior mean under smoothing parameter $\omega$. The plug-in estimator $h(\hat{\bth}_{\omega})$ is consistent, i.e., $h(\hat{\bth}_{\omega}) \overset{P}{\rightarrow} h(\bth)$ for all $\omega \in (0, 1]$ as $n \rightarrow \infty$.
\end{assump}

\begin{rmk}
\label{rmk_consistency}
    Assumption \ref{assump_consistency} requires that the plug-in posterior mean estimator under fractional VB is consistent for every $\omega \in (0, 1]$. This is a mild condition, as the existing theory for fractional VB already establishes stronger results. For example, for $\omega \in (0, 1]$, \cite{yang2020alpha} showed that the variational posterior mean $\hat{\bth}_{\omega}$ achieves minimax concentration rates around the true parameter $\bth$, which is stronger than consistency. When $h(\cdot)$ is a continuous function, the continuous mapping theorem ensures that $h(\hat{\bth}_{\omega}) \overset{P}{\rightarrow} h(\bth)$. 
\end{rmk}

   Let $f_{\omega}(X_{i}) = p^{\omega}(X_{i} | \bth, \bZ)p^{\omega}(\bZ | \bth)$ as in \eqref{gibbs_post}, which also includes the simplified case $f_{\omega}(X_{i}) = p^{\omega}(X_{i} | \bth)$ when there is no latent variable $\bZ$. 
\begin{assump}[Donsker Class]
\label{Donsker}
 Suppose $\mathcal{F} := \left\{f_{w}(\cdot): \omega \in (0, 1]\right\}$ is a Donsker class and $f_{\omega}$ is measurable for all $\omega \in (0, 1]$. In addition, for any $\delta > 0$, the class $\mathcal{F}_{\delta} := \left\{f_{\omega_{1}} - f_{\omega_{2}}: f_{\omega_{1}} \in \mathcal{F}, f_{\omega_{2}} \in \mathcal{F} \text{ and } \rho_{P}(f_{\omega_{1}} - f_{\omega_{2}}) < \delta\right\}$ is measurable, where $\rho_{P}(\cdot)$ is any metric that depends on true data distribution $P$.
\end{assump}

\begin{rmk}
\label{rmk_donsker}
     Assumption \ref{Donsker} is useful for establishing the uniform convergence of bootstrap-based credible interval coverage to that of the original sample. Specifically, since different values of $\omega$ generate distinct posterior distributions, this assumption ensures that the convergence holds uniformly across the entire range of $\omega$ values. This assumption can be verified using a standard Donsker theorem argument (see, for example, \cite{sen2018gentle}). Suppose there is an envelope function $F$ of $\mathcal{F}$, i.e. $\underset{f_{\omega} \in \mathcal{F}}{\sup} |f_{\omega}(x)| < F(x)$ for all $x$, such that $\mathbb{E}_{P}[F] < \infty$, and that the bracketing entropy satisfies $J_{[]}\left(1, \mathcal{F}, L_{2}(P)\right):= \int_{0}^{1}\sqrt{\log N_{[]}\left(\epsilon, \mathcal{F} \cup \{0\}, L_{2}(P)\right)} d\epsilon < \infty$. Then $\mathcal{F}$ is a Donsker class. When there is no latent variable, $\mathcal{F}$ is the space of all power likelihood functions $f_{\omega}(X_{i}) = p^{\omega}(X_{i} | \bth)$ with $\omega \in (0, 1]$. For many common models (e.g., Gaussian and other exponential family distributions), the distribution is upper-bounded by a constant, so there exists some constant $M < \infty$ with $\underset{f_{\omega} \in \mathcal{F}}{\sup} |f_{\omega}(x)| < M$, and we can let $M$ be the envelope function, which satisfies $\mathbb{E}_{P}[M] < \infty$. For the bracketing condition, suppose $f_{\omega}(x) \in \mathcal{F}$ is Lipschitz in $\omega$ under the $L_{2}(P)$ metric with Lipschitz constant $L < \infty$, that is, $\|f_{\omega_{1}}-f_{\omega_{2}}\|_{L_{2}(P)}\le L|\omega_{1}-\omega_{2}|.$
     Then for any $\epsilon > 0$, we can cover $(0,1]$ by at most $L/\epsilon$ grid points in $\omega$, which implies $\log N_{[]}\left(\epsilon, \mathcal{F} \cup \{0\}, L_{2}(P)\right) = O(\log(\epsilon^{-1}))$, and hence $J_{[]}\left(1, \mathcal{F}, L_{2}(P)\right) < \infty$. The extension of this argument to the case with latent variables follows analogously under similar suitable conditions.
\end{rmk}

Our next assumption imposes a regularity on credible intervals, formulated via Hadamard differentiability (see, \cite{wellner2013weak}, for example). A map $T: \mathbb{D} \rightarrow \mathbb{E}$ is Hadamard differentiable at $\xi \in \mathbb{D}$ if there is a continuous linear map $T^{\prime}_{\xi}: \mathbb{D} \rightarrow \mathbb{E}$ that satisfies $\lim\limits_{a \rightarrow 0, n \rightarrow \infty}\dfrac{T(\xi + ab_{n}) - T(\xi)}{a} = T_{\xi}^{\prime}(b)$ for sequence $a \in \mathbb{R}, b_{n} \rightarrow b \in \mathbb{D}$. 

\begin{assump}[Hadamard Differentiability]
\label{hadamard}
    For a fixed $\bth$, let $\mathbb{P}_{n}$ denote the empirical measure of the observation data $\bX = (X_{1}, \cdots, X_{n})$, and let $P_{\bth}$ denote the true data-generating distribution. We define the mapping $T_{\bth}(\cdot)$ that takes an empirical measure $\mathbb{P}_{n}$ and produces a functional $T_{\bth}(\mathbb{P}_{n}): \mathcal{F} \to \mathbb{R}$ given by
    \[
    T_{\bth}(\mathbb{P}_{n})(f_{\omega}) = \sigma\left(h(\bth) - L(\mathbb{P}_{n})(f_{\omega})\right) \cdot \sigma\left(U(\mathbb{P}_{n})(f_{\omega}) - h(\bth)\right),
    \]
    where $L(\mathbb{P}_{n}), U(\mathbb{P}_{n}): \mathcal{F} \to \mathbb{R}$ are functionals that extract the lower and upper bounds of the credible interval from any fractional posterior $f_{\omega} \in \mathcal{F}$. Here, $\sigma(z) = (1 + e^{-kz})^{-1}$ is the sigmoid function that provides a smooth approximation to the coverage indicator. We assume that $T_{\bth}(\cdot)$ is Hadamard differentiable at $P_{\bth}$.
\end{assump}

\begin{rmk}
\label{rmk_hadamard}
    Hadamard differentiability describes how smoothly a mapping $T_{\bth}(\mathbb{P}_{n})$ changes under a small perturbation direction $b_{n} \rightarrow b$. If it behaves almost linearly in response to the perturbation, then the map is Hadamard differentiable. 
    For any $\bth$ and $\mathbb{P}_{n}$, the operator $T_{\bth}(\mathbb{P}_{n})$ in Assumption \ref{hadamard} maps the class $\mathcal{F}$ to a real value between $(0, 1)$. For a specific $f_{\omega} \in \mathcal{F}$, $T_{\bth}(\mathbb{P}_{n})(f_{\omega})$ is a smoothing approximation of indicator function $\mathbbm{1}\{h(\bth) \in [L(\mathbb{P}_{n})(f_{\omega}), U(\mathbb{P}_{n})(f_{\omega})]\}$. The sigmoid function used in the assumption is a smooth surrogate of the indicator function. To apply functional delta-method, it is necessary to make sure that $T_{\bth}(\cdot)$ is Hadamard differentiable, which requires both $L(\cdot)$ and $U(\cdot)$ to be Hadamard differentiable. For more detail of functional delta-method and Hadamard differentiability, readers are referred to \cite{van1991differentiable, van1991efficiency}.
\end{rmk}

\begin{thm}[Convergence of Coverage Probability]
\label{converge}
    Suppose $c_{q^{*}}(\omega; \alpha)$ is the frequentist coverage probability of the $100(1 - \alpha)\%$ credible interval of $h(\bth)$ constructed from the fractional VB posterior distribution $q^{*}_{\omega}$, and $\hat{c}_{q^{*}}^{B}(\omega; \alpha)$ is the bootstrap empirical coverage probability defined in \eqref{eq:c^B_q}. 
    If Assumptions \ref{assump_consistency}-\ref{hadamard} hold, then for each fixed $\omega \in (0, 1]$, 
    \begin{equation}\label{thm_result}
        \hat{c}_{q^{*}}^{B}(\omega; \alpha) \overset{P}{\rightarrow} c_{q^{*}}(\omega; \alpha),
    \end{equation}
    as $n, B \rightarrow \infty$. 
\end{thm}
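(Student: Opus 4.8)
The plan is to decompose $\hat{c}^{B}_{q^{*}}(\omega;\alpha) - c_{q^{*}}(\omega;\alpha)$ into three contributions and bound each in turn: (i) the Monte Carlo error from using $B<\infty$ bootstrap replicates instead of the exact conditional probability; (ii) the error from plugging in $h(\hat{\bth}_{\omega}^{\bX_{1}})$ for the unknown $h(\bth)$; and (iii) the discrepancy between the population bootstrap coverage, computed on resamples of $\bX_{2}$, and the true frequentist coverage. A structural point I would exploit throughout is that the sample split makes $\bX_{1}$ and $\bX_{2}$ independent, so conditionally on $\bX_{1}$ the surrogate target $h(\hat{\bth}_{\omega}^{\bX_{1}})$ is a deterministic number with respect to the resampling of $\bX_{2}$, exactly mirroring the status of the fixed $\bth$ in the definition of $c_{q^{*}}(\omega;\alpha)$ and preventing the ``overconfidence'' discussed in Section~\ref{sec:rel_literature}.

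For (i), let $\hat{c}_{q^{*}}(\omega;\alpha) = P^{*}\!\left(h(\hat{\bth}_{\omega}^{\bX_{1}}) \in C_{q^{*}_{\omega}}(\bX_{2}^{*};\alpha)\mid \bX\right)$ be the exact conditional probability, with $\bX_{2}^{*}$ a generic nonparametric bootstrap resample of $\bX_{2}$. Given $\bX$, the summands in \eqref{eq:c^B_q} are i.i.d.\ Bernoulli, so the conditional strong law of large numbers gives $\hat{c}^{B}_{q^{*}}(\omega;\alpha) - \hat{c}_{q^{*}}(\omega;\alpha)\to 0$ as $B\to\infty$, for almost every data sequence and hence in probability. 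It then remains to control $\hat{c}_{q^{*}}(\omega;\alpha) - c_{q^{*}}(\omega;\alpha)$ as $n\to\infty$.

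For (ii) and (iii) I would pass to the smooth surrogate of Assumption~\ref{hadamard}. Write $\mathbb{P}_{n,2}$ for the empirical measure of $\bX_{2}$ and $\mathbb{P}_{n,2}^{*}$ for its nonparametric bootstrap; the indicator in \eqref{eq:c^B_q} differs from $T_{\bth}(\cdot)(f_{\omega})$ only through the sigmoid smoothing, an error I would render asymptotically negligible by taking the sharpness $k=k_{n}\to\infty$ slowly together with a mild continuity (anti-concentration) condition on the law of the endpoints $L(\mathbb{P}_{n,2})(f_{\omega}), U(\mathbb{P}_{n,2})(f_{\omega})$ around $h(\bth)$. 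Modulo this bridge, the bootstrap coverage equals $\mathbb{E}^{*}[\,T_{\hat{\bth}_{\omega}^{\bX_{1}}}(\mathbb{P}_{n,2}^{*})(f_{\omega})\mid \bX\,]$ and the true coverage equals $\mathbb{E}[\,T_{\bth}(\mathbb{P}_{n,2})(f_{\omega})\,]$. Since $\sigma$ is bounded and Lipschitz and $h$ is continuous, Assumption~\ref{assump_consistency}, applied to the (still consistent) half-sample estimator, lets me swap $T_{\hat{\bth}_{\omega}^{\bX_{1}}}$ for $T_{\bth}$ at an $o_{P}(1)$ cost, reducing the problem to $\mathbb{E}^{*}[\,T_{\bth}(\mathbb{P}_{n,2}^{*})(f_{\omega})\mid \bX\,] \overset{P}{\rightarrow} \mathbb{E}[\,T_{\bth}(\mathbb{P}_{n,2})(f_{\omega})\,]$. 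The engine here is empirical process theory: Assumption~\ref{Donsker} makes $\mathcal{F}$ Donsker with the stated measurability, so the (conditional) bootstrap central limit theorem --- a consequence of the conditional multiplier CLT --- gives $\sqrt{n}\,(\mathbb{P}_{n,2}^{*}-\mathbb{P}_{n,2}) \rightsquigarrow \mathbb{G}_{P}$ in $\ell^{\infty}(\mathcal{F})$ conditionally in probability, matching the unconditional limit $\sqrt{n}\,(\mathbb{P}_{n,2}-P_{\bth}) \rightsquigarrow \mathbb{G}_{P}$. Assumption~\ref{hadamard} then feeds the functional delta method for the bootstrap (see, e.g., \cite{wellner2013weak}), yielding $\sqrt{n}\,(T_{\bth}(\mathbb{P}_{n,2}^{*}) - T_{\bth}(\mathbb{P}_{n,2})) \rightsquigarrow T^{\prime}_{\bth,P_{\bth}}(\mathbb{G}_{P})$ conditionally, hence $T_{\bth}(\mathbb{P}_{n,2}^{*}) - T_{\bth}(\mathbb{P}_{n,2}) = o_{P^{*}}(1)$, and also $T_{\bth}(\mathbb{P}_{n,2}) - T_{\bth}(P_{\bth}) = O_{P}(n^{-1/2})$. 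As $T_{\bth}$ takes values in $(0,1)$, dominated convergence lifts these to the (conditional) expectations, so both sides converge to the common value $T_{\bth}(P_{\bth})(f_{\omega})$ and their difference is $o_{P}(1)$. Combining (i)--(iii) by the triangle inequality yields \eqref{thm_result}.

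I expect the principal obstacle to be the rigorous use of the functional delta method for the bootstrap on the coverage functional $T_{\bth}$: one must verify that $T_{\bth}$ is well defined as a map into a suitable (separable) range space, that the endpoint functionals $L(\cdot),U(\cdot)$ induced by the fractional-VB optimization are genuinely Hadamard differentiable tangentially to the support of $\mathbb{G}_{P}$ (Assumption~\ref{hadamard} postulates this, but connecting it to the actual variational optimization is delicate), and that the measurability conditions required by the conditional multiplier CLT hold --- which is exactly why the extra measurability conditions on $\mathcal{F}$ and $\mathcal{F}_{\delta}$ are imposed in Assumption~\ref{Donsker}. A secondary but genuine difficulty is making the sigmoid-to-indicator bridge rigorous and uniform enough that the smoothing error does not leak into the limit; this requires either an explicit rate linking $k_{n}$ to $n$ with an anti-concentration bound for $h(\bth)$ relative to the interval endpoints, or reinterpreting the theorem as a statement about the smoothed coverage functional with $k$ held fixed.
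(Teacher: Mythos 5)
Your proposal follows essentially the same route as the paper's proof: sample splitting so that $h(\hat{\bth}_{\omega}^{\bX_{1}})$ is fixed relative to the resampling, the Donsker property combined with the conditional multiplier CLT and the functional delta method for the bootstrap applied to the smoothed coverage functional $T_{\bth}$, Assumption~\ref{assump_consistency} to swap $h(\hat{\bth}_{\omega}^{\bX_{1}})$ for $h(\bth)$, the sigmoid-to-indicator limit of Lemma~\ref{lemma_smooth}, and a law of large numbers over the $B$ replicates. The only notable differences are that you center the bootstrap process at the empirical measure of $\bX_{2}$ rather than of the full sample and that you explicitly flag the order-of-limits issue in the sigmoid bridge and the verification of Hadamard differentiability of the endpoint functionals --- points on which your write-up is, if anything, more careful than the paper's.
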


The proof of Theorem \ref{converge} is provided in Section \ref{proof}.

\begin{rmk}
\label{rmk_thm}
    Theorem \ref{converge} claims that under conditions, the empirical coverage probability of split bootstrap samples will converge to that of the original samples for any $\omega \in (0, 1]$. Therefore, if we can select $\omega_{0}$ such that $\hat{c}_{q^{*}}^{B}(\omega_{0}; \alpha) = 100(1 - \alpha)\%$, the corresponding credible interval constructed by original data $\bX$ with $\omega_{0}$ can also attain target frequentist coverage probability. This is formally described in Theorem \ref{cor_coverage} under additional mild assumptions.
\end{rmk}

Theorem~\ref{converge} ensures that the bootstrap empirical coverage $\hat{c}_{q^{*}}^{B}(\omega; \alpha)$ converges to the true coverage $c_{q^{*}}(\omega; \alpha)$ pointwise for all $\omega \in (0, 1]$. To ensure that the coverage equipped with the $M$-estimator obtained by solving \eqref{eq:optimize} converges to its nominal level, we also require uniform convergence over $\Omega$. This follows straightforwardly from the proof of Theorem~\ref{converge} under an additional mild assumption on the fractional VB posterior mean.

\begin{assump}\label{assump:uniform}
    For the fractional VB posterior mean $\hat{\bth}_{\omega}$ defined above, assume $h(\hat{\bth}_{\omega}) \overset{P}{\rightarrow} h(\bth)$ uniformly for all $\omega \in \Omega$ (i.e., $\underset{\omega \in \Omega}{\sup}\|h(\hat{\bth}_{\omega}) - h(\bth) \| \overset{P}{\rightarrow} 0$) as $n \rightarrow \infty$.
\end{assump}

\begin{rmk}\label{rmk:uniform_ass}
    Assumption~\ref{assump:uniform} strengthens Assumption~\ref{assump_consistency} by requiring uniform convergence over $\Omega$. When $\Omega = \boldsymbol{\omega}$ in grid-search TVB, uniform convergence on the finite grid is automatically satisfied since we have pointwise convergence. For the continuous case $\Omega = [\delta, 1]$ for small $\delta > 0$, \cite{yang2020alpha} establish uniform convergence on $[\delta, 1 - \delta]$. While their risk bound diverges as $\omega \rightarrow 1$, consistency at the boundary $\omega = 1$ holds under stronger regularity conditions \citep{yang2020alpha}.  Therefore, uniform convergence of $\hat{\bth}_{\omega}$ can be established under sufficiently strong conditions when $\Omega = [\delta, 1]$. By the continuous mapping theorem, uniform convergence of $h(\hat{\bth}_{\omega})$ follows immediately.
\end{rmk}

Under Assumption~\ref{assump:uniform}, we can extend the pointwise convergence results in Theorem~\ref{converge} to uniform convergence over $\Omega$, as stated below.

\begin{cor}[Uniform Convergence of Coverage Probability]\label{cor:uniform}
    Suppose $c_{q^{*}}(\omega; \alpha)$ is the frequentist coverage probability of the $100(1 - \alpha)\%$ credible interval of $h(\bth)$ constructed from the fractional VB posterior distribution $q^{*}_{\omega}$, and $\hat{c}_{q^{*}}^{B}(\omega; \alpha)$ is the bootstrap empirical coverage probability defined in \eqref{eq:c^B_q}. 
    If Assumptions \ref{Donsker}-\ref{assump:uniform} hold, then 
    \begin{equation}\label{thm_result}
        \underset{\omega \in \Omega}{\sup}\left|\hat{c}_{q^{*}}^{B}(\omega; \alpha) - c_{q^{*}}(\omega; \alpha)\right| \overset{P}{\rightarrow} 0,
    \end{equation}
    as $n, B \rightarrow \infty$. 
\end{cor}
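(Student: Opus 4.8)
The plan is to bootstrap off the proof of Theorem~\ref{converge} rather than redo it. That proof decomposes the pointwise gap $\hat{c}_{q^{*}}^{B}(\omega;\alpha) - c_{q^{*}}(\omega;\alpha)$ into (i) a bootstrap fluctuation term, controlled via the conditional multiplier CLT and the functional delta-method applied to $T_{\bth}$, which by Assumption~\ref{Donsker} already yields weak convergence of the relevant empirical processes \emph{as elements of $\ell^\infty(\mathcal{F})$}, hence uniformly in $\omega$; and (ii) a plug-in term coming from replacing the true $h(\bth)$ by the surrogate $h(\hat{\bth}_{\omega}^{\bX_1})$, which in the pointwise proof is handled by Assumption~\ref{assump_consistency}. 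Step one of the plan is therefore to revisit the proof of Theorem~\ref{converge} and observe that every use of the Donsker property (Assumption~\ref{Donsker}) and of Hadamard differentiability (Assumption~\ref{hadamard}) is already uniform over $\omega$: the class $\mathcal{F}=\{f_\omega:\omega\in(0,1]\}$ is indexed by $\omega$, so $\sup_\omega$ of the corresponding process terms is exactly the sup-norm on $\ell^\infty(\mathcal{F})$, and the continuous-mapping / delta-method arguments transfer that uniformity to $T_{\bth}(\mathbb{P}_n)(f_\omega)$ and thence to $L,U$ and the coverage functional.

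Step two is to upgrade the plug-in term. In the pointwise proof one writes, schematically, $\big|\hat{c}_{q^{*}}^{B}(\omega;\alpha)-c_{q^{*}}(\omega;\alpha)\big| \le (\text{process term}) + \big|\,\mathbb{E}^*\,[\,\mathbbm{1}\{h(\hat{\bth}_\omega^{\bX_1})\in C\} - \mathbbm{1}\{h(\bth)\in C\}\,]\big|$, and bounds the last piece using $h(\hat{\bth}_\omega^{\bX_1})\xrightarrow{P}h(\bth)$ together with a Lipschitz/continuity property of the (smoothed) coverage map in its first argument — this is where the sigmoid surrogate $\sigma(\cdot)$ in Assumption~\ref{hadamard} does its work, since $\sigma$ is globally Lipschitz. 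Under Assumption~\ref{assump:uniform} we have $\sup_{\omega\in\Omega}\|h(\hat{\bth}_\omega^{\bX_1})-h(\bth)\|\xrightarrow{P}0$, so taking $\sup_{\omega\in\Omega}$ through this bound and invoking the uniform Lipschitz constant of $\sigma$ (with the fixed temperature $k$) controls the plug-in term uniformly. Combining with the uniform control of the process term from step one gives the claimed $\sup_{\omega\in\Omega}|\hat{c}_{q^{*}}^{B}(\omega;\alpha)-c_{q^{*}}(\omega;\alpha)|\xrightarrow{P}0$ as $n,B\to\infty$.

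A technical point worth spelling out in the write-up: one should state the convergence in $\ell^\infty(\mathcal{F})$ carefully so that passing from weak convergence of processes to convergence in probability of the sup-norm of the \emph{difference} (bootstrap minus original) is legitimate — this is the usual step where one uses that the limiting Gaussian process is the same for the bootstrap and the original empirical process (by the conditional multiplier CLT), so their difference converges to zero in $\ell^\infty(\mathcal{F})$, and continuity of $\sup_\omega$ (a continuous functional on $\ell^\infty$) plus the continuous mapping theorem closes it. Because $\Omega\subseteq(0,1]$, restricting the sup from $(0,1]$ to $\Omega$ only makes the quantity smaller, so no separate argument for the restriction is needed.

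The main obstacle I anticipate is the boundary behavior at $\omega=1$ inside the plug-in term when $\Omega=[\delta,1]$: as noted in Remark~\ref{rmk:uniform_ass}, the fractional-VB risk bounds of \cite{yang2020alpha} degrade as $\omega\to1$, so uniform consistency of $h(\hat{\bth}_\omega^{\bX_1})$ up to and including $\omega=1$ is exactly the content of Assumption~\ref{assump:uniform} and must simply be assumed (or imported under the stronger regularity conditions cited there) rather than derived; everything else is a routine ``add a sup'' modification of the pointwise argument. The grid case $\Omega=\boldsymbol{\omega}$ is trivial since a finite max of quantities each converging in probability to zero converges in probability to zero, so the corollary's content is really the continuous case.
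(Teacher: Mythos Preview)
Your proposal is correct and follows essentially the same route as the paper's own proof: observe that the process-level bounds in the proof of Theorem~\ref{converge} are already stated as a supremum over $f_\omega\in\mathcal{F}$ (hence uniform in $\omega$), then replace the pointwise consistency Assumption~\ref{assump_consistency} by the uniform version Assumption~\ref{assump:uniform} to control the plug-in term uniformly, and finally pass from $c_{q^{*}}^{B}$ to $\hat{c}_{q^{*}}^{B}$ via the $O_p(B^{-1/2})$ bound. Your write-up is in fact more explicit than the paper's (which simply points back to the key displays of Theorem~\ref{converge} and invokes Assumption~\ref{assump:uniform}), particularly in articulating why the $\ell^\infty(\mathcal{F})$ convergence legitimizes taking $\sup_\omega$ and in isolating the role of the sigmoid's Lipschitz constant for the plug-in term.
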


The proof of Corollary~\ref{cor:uniform} follows as a straightforward extension of the proof of Theorem~\ref{converge} and is provided in Section~\ref{proof}. With Corollary~\ref{cor:uniform} established, to ensure that our calibration procedure is well-defined and that the $M$-estimator from \eqref{eq:optimize} converges to a unique true value of the smoothing parameter, we introduce the following assumption.

\begin{assump}
\label{assump:uniqueness}
For any target coverage level $1 - \alpha \in (0, 1)$, there exists a unique $\omega^{*} \in \Omega$, such that the $100(1 - \alpha)\%$ credible interval from  $\omega^{*}$-VB posterior achieves the nominal coverage, namely $$\lim_{n \rightarrow \infty} \mathbb{P}\left(h(\bth) \in C_{q^{*}_{\omega^{*}}}(\bX; \alpha) \right) = 1 - \alpha.$$ Additionally, assume $\mathbb{P}\left(h(\bth) \in C_{q^{*}_{\omega^{*}}}(\bX; \alpha) \right)$ is equicontinuous as function of $\omega \in (0, 1]$ for all $n$.
\end{assump}

\begin{thm}
\label{cor_coverage}
    Let $\hat{\omega}_{0}$ be the smoothing parameter estimated by Algorithm \ref{ssb_cal} or Algorithm \ref{ssb_grid}. For continuous map $h(\cdot): \Theta \rightarrow \mathbb{R}$, let $C_{q^{*}_{\hat{\omega}_{0}}}(\bX; \alpha)$ be the $100(1 - \alpha)\%$ credible interval of $h(\bth)$ based on variational posterior $q^{*}_{\hat{\omega}_{0}}$. Under Assumptions \ref{Donsker}-\ref{assump:uniqueness}, we have
    \begin{equation}\label{main_cor}
        \mathbb{P}\left(h(\bth) \in C_{q^{*}_{\hat{\omega}_{0}}}(\bX; \alpha)\right) \overset{P}{\rightarrow} 1 - \alpha,
    \end{equation}
    as $n, B \rightarrow \infty$.
\end{thm}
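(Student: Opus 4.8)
The proof of Theorem~\ref{cor_coverage} amounts to an argmin-consistency argument built on top of the uniform convergence already delivered by Corollary~\ref{cor:uniform}, followed by a transfer of coverage from the bootstrap surrogate $\hat{c}^{B}_{q^{*}}$ back to the full-data credible interval. The first step is to observe that, by construction, $\hat{\omega}_{0}$ is (asymptotically) a minimizer of $\omega\mapsto|\hat{c}^{B}_{q^{*}}(\omega;\alpha)-(1-\alpha)|$ over $\Omega$: for the grid-search rule of Algorithm~\ref{ssb_grid} this is exact by \eqref{eq:optimize}, and for the stochastic-approximation rule of Algorithm~\ref{ssb_cal} it follows from the Robbins--Monro step-size conditions $\sum\kappa_{k}=\infty$, $\sum\kappa_{k}^{2}<\infty$ together with the monotonicity of coverage in $\omega$ (smaller $\omega$ inflates the interval), so that in either case $|\hat{c}^{B}_{q^{*}}(\hat{\omega}_{0};\alpha)-(1-\alpha)|\le|\hat{c}^{B}_{q^{*}}(\omega^{*};\alpha)-(1-\alpha)|+o_{P}(1)$, where $\omega^{*}\in\Omega$ is the point supplied by Assumption~\ref{assump:uniqueness}, used here as a fixed competitor.

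Next I would bound the right-hand side. By Corollary~\ref{cor:uniform} evaluated at the fixed point $\omega^{*}$, $\hat{c}^{B}_{q^{*}}(\omega^{*};\alpha)=c_{q^{*}}(\omega^{*};\alpha)+o_{P}(1)$, and by the first part of Assumption~\ref{assump:uniqueness} the deterministic coverage $c_{q^{*}}(\omega^{*};\alpha)=\mathbb{P}(h(\bth)\in C_{q^{*}_{\omega^{*}}}(\bX;\alpha))\to1-\alpha$; hence $\hat{c}^{B}_{q^{*}}(\hat{\omega}_{0};\alpha)\overset{P}{\rightarrow}1-\alpha$. Invoking Corollary~\ref{cor:uniform} a second time, now using that the convergence is \emph{uniform} over $\omega\in\Omega$ so that it also holds at the \emph{random} argument $\hat{\omega}_{0}$, gives $|\hat{c}^{B}_{q^{*}}(\hat{\omega}_{0};\alpha)-c_{q^{*}}(\hat{\omega}_{0};\alpha)|\le\sup_{\omega\in\Omega}|\hat{c}^{B}_{q^{*}}(\omega;\alpha)-c_{q^{*}}(\omega;\alpha)|=o_{P}(1)$, and therefore $c_{q^{*}}(\hat{\omega}_{0};\alpha)\overset{P}{\rightarrow}1-\alpha$. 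Since, by the uniqueness in Assumption~\ref{assump:uniqueness}, $\omega^{*}$ is the only point of $\Omega$ at which the limiting coverage equals $1-\alpha$, the equicontinuity of $\omega\mapsto c_{q^{*}}(\omega;\alpha)$ (uniform in $n$) then upgrades this to $\hat{\omega}_{0}\overset{P}{\rightarrow}\omega^{*}$ by the usual argmin-consistency reasoning.

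The final and, I expect, hardest step is to pass from $\hat{\omega}_{0}\overset{P}{\rightarrow}\omega^{*}$ to the stated conclusion \eqref{main_cor}, $\mathbb{P}(h(\bth)\in C_{q^{*}_{\hat{\omega}_{0}}}(\bX;\alpha))\overset{P}{\rightarrow}1-\alpha$, because $\hat{\omega}_{0}$ is chosen from the same data $\bX$ that subsequently builds the reported interval, so the interval and its nominal level are not independent. The device is once more the equicontinuity in Assumption~\ref{assump:uniqueness}: given $\epsilon>0$, choose $\delta>0$ such that $|c_{q^{*}}(\omega;\alpha)-c_{q^{*}}(\omega^{*};\alpha)|<\epsilon$ for all $n$ whenever $|\omega-\omega^{*}|<\delta$; on the event $\{|\hat{\omega}_{0}-\omega^{*}|<\delta\}$, which has probability tending to one, the Hadamard differentiability of the endpoint functionals $L(\cdot)$ and $U(\cdot)$ from Assumption~\ref{hadamard} forces the endpoints of $C_{q^{*}_{\omega}}(\bX;\alpha)$ to vary continuously in $\omega$, so that $\{h(\bth)\in C_{q^{*}_{\hat{\omega}_{0}}}(\bX;\alpha)\}$ and $\{h(\bth)\in C_{q^{*}_{\omega^{*}}}(\bX;\alpha)\}$ differ only on a set whose probability vanishes as $\hat{\omega}_{0}$ concentrates at the deterministic $\omega^{*}$; hence $\mathbb{P}(h(\bth)\in C_{q^{*}_{\hat{\omega}_{0}}}(\bX;\alpha))$ is trapped within $\mathbb{P}(h(\bth)\in C_{q^{*}_{\omega^{*}}}(\bX;\alpha))\pm\epsilon+o(1)$, and letting $n\to\infty$ and then $\epsilon\to0$ together with $c_{q^{*}}(\omega^{*};\alpha)\to1-\alpha$ gives \eqref{main_cor}. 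The one place that needs genuine care in a full write-up is making the ``events differ with vanishing probability'' claim quantitative — e.g., via a bounded-density condition on the law of $h(\bth)$ near the interval endpoints that is implicit in the Hadamard-differentiability setup — and verifying that the data-reuse dependence between the selection splits and the final interval does not obstruct this; that is where the residual work lies.
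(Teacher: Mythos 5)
Your proposal is correct and follows essentially the same route as the paper: argmin consistency of $\hat{\omega}_{0}$ toward $\omega^{*}$ obtained from the uniform convergence in Corollary~\ref{cor:uniform} together with the uniqueness/equicontinuity in Assumption~\ref{assump:uniqueness} (the paper invokes Theorem~5.7 of \cite{van2000asymptotic} for this step), followed by a triangle-inequality transfer of coverage through the limit function $f(\omega)=\lim_{n}\mathbb{P}(h(\bth)\in C_{q^{*}_{\omega}}(\bX;\alpha))$. The data-reuse subtlety you flag at the end — that the interval is built from the same $\bX$ used to select $\hat{\omega}_{0}$, so controlling $\mathbb{P}(h(\bth)\in C_{q^{*}_{\hat{\omega}_{0}}}(\bX;\alpha))-f(\hat{\omega}_{0})$ requires uniformity in $\omega$ of the convergence to $f$ — is present but left implicit in the paper's proof as well, which simply asserts that both terms in the triangle inequality vanish.
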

Theorem~\ref{cor_coverage} provides a theoretical guarantee that the $M$-estimator selected by Algorithm~\ref{ssb_cal} or Algorithm~\ref{ssb_grid} achieves nominal $100(1 - \alpha)\%$ coverage. We verify this result through numerical analysis across different scenarios in the following sections. The proof of Theorem~\ref{cor_coverage} is provided in Section~\ref{proof}.

\section{Two examples: Gaussian Mixture Model and Bayesian Mixture Linear Regression Model}
\label{example}
In this section, we use Gaussian Mixture Model and Bayesian Mixture Linear Regression Model to illustrate fractional VB that our proposed TVB method relies on.

We use the mean-field variational family for $\mathcal{Q}$ in both examples. The mean-field variational family is a factorizable distribution family given by $\mathcal{Q} = \left\{q : q(\bth) = \prod\limits_{i = 1}^{p}q_{\theta_{i}}(\theta_{i})\right\}$. To solve optimization problem \eqref{elbo} with $\mathcal{Q}$ being mean-field variational family, we apply a coordinate ascent algorithm (CAVI) by updating $q_{\theta_{i}}(\cdot)$ coordinately at each iteration until convergence. In particular, at each iteration, $q^{*}_{\theta_{i}}(\cdot)$ is obtained by
\begin{equation}\label{CAVI}
    \log q^{*}_{\theta_{i}} = \int_{\Theta} \left(p(\bth, \bX) \prod_{j \neq i} q^{*}_{\theta_{j}}(\theta_{j}) \right)d\theta_{i}\cdots d\theta_{i - 1} d\theta_{i} \cdots d\theta_{p} + C,
\end{equation}
where $C$ is a normalizing constant for the distribution $q^{*}_{\theta_{i}}$. Once the algorithm converges, we can use $q^{*}(\bth) = \prod\limits_{i = 1}^{p} q_{\theta_{i}}^{*}(\theta_{i})$ as the mean-field variational approximation of the posterior distribution $p(\bth | \bX)$.

However, the mean-field variational updates for the fractional posterior with latent variables in \eqref{gibbs_post} require specific modifications, which we detail in the algorithmic implementations for each example below.

\subsection{Gaussian Mixture Model}
\label{gmm}
Suppose each row of the observation matrix $\bX \in \mathbb{R}^{N \times p}$ is a $p$-dimensional random vector drawn from Gaussian mixture model (GMM) with $K$ clusters, such that $\bx_{n} \sim \sum\limits_{k = 1}^{K} \pi_{k} N(\bmu_{k}, \bLam_{k}^{-1})$, where the mixing coefficient of $K$ clusters is $\bpi = (\pi_{1}, \cdots, \pi_{K})^{\prime}$. The membership matrix $\bZ \in \mathbb{R}^{N \times K}$ consists of binary entries $z_{nk}$, where $z_{nk} = 1$ indicates that the $n$th observation $\bx_{n}$ belongs to cluster $k$. For each cluster, we denote the mean vectors as $\bmu = (\bmu_{1}, \cdots, \bmu_{K})^{\prime}$ and the precision matrices $\bLam = (\bLam_{1}, \cdots, \bLam_{K})^{\prime}$. The prior distribution and likelihood function are
\begin{equation}\label{prior_gmm}
\begin{aligned}
    & p\left(\bpi\right) = \mathrm{Dir}(\bpi | \bal_{0}) = C(\bal_{0})\prod_{k = 1}^{K}\pi_{k}^{\alpha_{0} - 1}\\
    & p\left(\bZ | \bpi\right) = \prod_{n = 1}^{N}\prod_{k = 1}^{K}\pi_{k}^{z_{nk}}\\
    & p\left(\bmu, \bLam\right) = p\left(\bmu | \bLam\right)p\left(\bLam\right) = \prod_{k = 1}^{K}N\left(\bmu_{k} | \bm_{0}, (\beta_{0}\bLam_{k})^{-1}\right)\mathcal{W}\left(\bLam_{k} | \bW_{0}, \nu_{0}\right)\\
    & p\left(\bX | \bZ, \bmu, \bLam\right) = \prod_{n = 1}^{N}\prod_{k = 1}^{K}N\left(\bx_{n} | \bmu_{k}, \bLam_{k}^{-1}\right)^{z_{nk}},
\end{aligned}
\end{equation}
where $\bal_{0} = (\alpha_{0}, \cdots, \alpha_{0})^{\prime}$, $\bm_{0}, \beta_{0}, \bW_{0}$ and $\nu_{0}$ are hyperparameters that will be updated. This is a latent variable model, and we have both the global parameters $\left(\bpi, \bmu, \bLam\right)$ and local parameter (latent variable) $\bZ$. Let $\bth = (\bpi, \bZ, \bmu, \bLam)$. We use the following fractional posterior distribution \citep{han2019statistical, yang2020alpha}: 
\begin{equation}\label{joint_gmm}
    p\left(\bX, \bpi, \bZ, \bmu, \bLam\right) = p^{\omega}\left(\bX | \bZ, \bmu, \bLam\right)p^{\omega}\left(\bZ | \bpi\right)p\left(\bpi\right)p\left(\bmu | \bLam\right)p\left(\bLam\right).
\end{equation}
The mean-field variational approximation is obtained by maximizing the following modified ELBO by assuming the factorization that $q\left(\bpi, \bZ, \bmu, \bLam\right) = q\left(\bZ\right)q\left(\bpi, \bmu, \bLam\right)$
\begin{equation}\label{elbo_gmm}
    q^{*} = \underset{q \in \mathcal{Q}}{\arg\max} \; \int_{\Theta} q(\bpi, \bmu, \bLam)q(\bZ) \log \left(\dfrac{p\left(\bX, \bpi, \bZ, \bmu, \bLam\right)}{q(\bpi, \bmu, \bLam)q^{\omega}(\bZ)}\right) d\bth.
\end{equation}
For notational simplicity, we suppress the subscripts of each $q(\cdot)$ in equation \eqref{elbo_gmm}. Algorithm \ref{CAVI_gmm} summarizes the procedure for solving optimization problem \eqref{elbo_gmm}. The detailed derivation and evaluation of the corresponding ELBO are provided in the supplemental materials. 

\begin{algorithm}[h!]
\caption{Fractional Mean-Field Variational Inference for GMM}\label{CAVI_gmm}
\begin{algorithmic}[1]

\REQUIRE Observation data $\bX \in \mathbb{R}^{N \times p}$, number of clusters $K$, hyperparameters $\bal_{0} = (\alpha_{0}, \cdots, \alpha_{0})^{\prime}$, $\bm_{0}, \beta_{0}, \bW_{0}, \nu_{0}$, smoothing parameter $\omega$, maximum number of iterations $\mathrm{max\_iter}$, tolerance $\epsilon > 0$. Initialization: coverage = FALSE, posterior parameters $\left(\alpha_{k}^{(0)}, \bm_{k}^{(0)}, \beta_{k}^{(0)}, \nu_{k}^{(0)}, \bW_{k}^{(0)}\right)$ for $k = 1, \cdots, K$, iteration number $t = 0$.

\WHILE{$t < \mathrm{max\_iter}$ \textbf{and} \textbf{not} coverage}
    \STATE Update posterior parameters of $q^{*}(\bZ)$ by
    \begin{equation}\label{update_qz_gmm}
        r_{nk} = \dfrac{\rho_{nk}}{\sum_{j = 1}^{K}\rho_{nj}},
    \end{equation}
    where $\log \rho_{nk} = \mathbb{E}_{\bpi}\left[\log \pi_{k}\right] + \dfrac{1}{2}\mathbb{E}_{\bLam}\left[\log |\bLam_{k}|\right] - \dfrac{1}{2}\mathbb{E}_{\bmu, \bLam}\left[(\bx_{n} - \bmu_{k})^{\prime}\bLam_{k}(\bx_{n} - \bmu_{k})\right]$

    \STATE For $k = 1, \cdots, K$, update posterior parameters of $q^{*}(\pi_{k})$ by
    \begin{equation}\label{update_qpi_gmm}
        \alpha_{k} = \alpha_{0} + \omega \cdot N_{k},
    \end{equation}
    where $N_{k} = \sum\limits_{n = 1}^{N} r_{nk}$

    \STATE For $k = 1, \cdots, K$, update posterior parameters of $q^{*}(\bmu_{k} | \bLam_{k})$ by
    \begin{equation}\label{update_qmu_gmm}
        \beta_{k} = \beta_{0} + \omega \cdot N_{k}, \quad
        \bm_{k} = \dfrac{1}{\beta_{k}}\left(\beta_{0}\bm_{0} + \omega \cdot N_{k}\bar{\bx}_{k}\right),
    \end{equation}
    where $\bar{\bx}_{k} = \dfrac{1}{N_{k}}\sum\limits_{n = 1}^{N} r_{nk}\bx_{n}$

    \STATE For $k = 1, \cdots, K$, update posterior parameters of $q^{*}(\bLam_{k})$ by
    \begin{equation}\label{update_qLam_gmm}
    \begin{aligned}
        & \nu_{k} = \nu_{0} + \omega \cdot N_{k}\\
        & \bW_{k}^{-1} = \bW_{0}^{-1} + \omega \cdot N_{k}\bS_{k} + \dfrac{\beta_{0}N_{k}}{\beta_{0} + N_{k}}(\bar{\bx}_{k} - \bm_{0})^{\prime}(\bar{\bx}_{k} - \bm_{0}),
    \end{aligned}
    \end{equation}
    where $\bS_{k} = \dfrac{1}{N_{k}}\sum\limits_{n = 1}^{N}r_{nk}(\bx_{n} - \bar{\bx}_{k})(\bx_{n} - \bar{\bx}_{k})^{\prime}$
\ENDWHILE

\STATE Return $q^{*}(\bZ) = \prod\limits_{n = 1}^{N}\prod\limits_{k = 1}^{K} r_{nk}^{z_{nk}}$, $q^{*}(\bpi) = \mathrm{Dir}(\bpi | \bal)$ where $\alpha_{k}$ is in \eqref{update_qpi_gmm}, $q^{*}(\bmu|\bLam) = \prod\limits_{k = 1}^{K} q^{*}(\bmu_{k}|\bLam_{k}) = \prod\limits_{k = 1}^{K} N\left(\bmu_{k} | \bm_{k}, (\beta_{k}\bLam_{k})^{-1}\right)$ and $q^{*}(\bLam) = \prod\limits_{k = 1}^{K} q^{*}(\bLam_{k}) = \prod\limits_{k = 1}^{K} \mathcal{W}(\bLam_{k} | \bW_{k}, \nu_{k})$.
\end{algorithmic}
\end{algorithm}

It is noteworthy that Algorithm \ref{CAVI_gmm} has the similar updating rule as the traditional variational approximation (for example, see \cite{bishop2006pattern}), except for the presence of the parameter $\omega$ in most terms.
 
Section \ref{empirical} contains numerical results for our proposed TVB method under this model. To apply the grid-search TVB method with $N$ samples, $K$ clusters, and $p$-dimensional data on a grid $\boldsymbol{\omega} \in (0, 1]^{m}$ with $B$ bootstrap replicates, we require a dictionary $\mathfrak{D} \in \mathbb{R}^{\mathfrak{d} \times m}$ according to Algorithm~\ref{CAVI_gmm}, where $$\mathfrak{d} = (B+1+1) \times \left(\underbrace{NK}_{q^{*}(\bZ)} + \underbrace{K}_{q^{*}(\bpi)} + \underbrace{(p+1)K}_{q^{*}(\bmu|\bLam)} + \underbrace{(p^2 + 1)K}_{q^{*}(\bLam)}\right).$$ This accounts for storing posterior parameters from the full data VB, VB on the first split dataset, and $B$ bootstrap replicates. Therefore, we obtain $\mathfrak{d} = (B+2) \times (N+p+p^2+3)K$.

\subsection{Bayesian Mixture Linear Model}
\label{bmlr}

In Bayesian mixture linear model (BMLR), we denote the observation data $(\bY, \mathcal{X})$, where $\bY \in \mathbb{R}^{N \times J_{n}}$ means $N$ response vectors with $\by_{n} \in \mathbb{R}^{J_{n}}$, and $\mathcal{X} \in \mathbb{R}^{N \times J_{n} \times p}$ is a $3$-dimensional tensor with each slice $\bX_{n} \in \mathbb{R}^{J_{n} \times p}$ being a design matrix of $p$ covariates. Therefore, we have totally $N$ datasets $(\by_{n}, \bX_{n})$ for classical linear regression and each of them belongs to one of $K$ clusters, distinguished by the regression coefficient $\bet_{k}$ for $k = 1, \cdots, K$. 

Let $\bZ \in \mathbb{R}^{N \times K}$ be the membership matrix, and $\bpi = (\pi_{1}, \cdots, \pi_{K})^{\prime}$ the mixing coefficient. Then we have prior and likelihood distribution as follows
\begin{equation}\label{prior_bmlr}
\begin{aligned}
    & p\left(\bZ | \bpi\right) = \prod_{n = 1}^{N}\prod_{k = 1}^{K} \pi_{k}^{z_{nk}}, \quad p\left(\bet | \btau\right) = \prod_{k = 1}^{K} N\left(\bet_{k} | 0, \tau_{k}^{-1}\bI_{p}\right)\\
    & p\left(\bpi\right) = \mathrm{Dir}(\bpi | \bal_{0}) = C(\bal_{0})\prod_{k = 1}^{K}\pi_{k}^{\alpha_{0} - 1}, \quad p\left(\btau\right) = \prod_{k = 1}^{K}\mathrm{Gamma}(\tau_{k} | a_{0}, b_{0})\\
    & p\left(\bY | \bZ, \bet, \mathcal{X}\right) = \prod_{n = 1}^{N}\prod_{k = 1}^{K} N\left(\by_{n} | \bX_{n}\bet_{k}, \lambda^{-1}\bI_{J_{n}}\right)^{z_{nk}},
\end{aligned}
\end{equation}
where $\bal_{0} = (\alpha_{0}, \cdots, \alpha_{0})^{\prime}, \btau = (\tau_{1}, \cdots, \tau_{K})^{\prime}, a_{0}, b_{0}, \lambda$ are hyperparameters and $\lambda$ as precision of noise level that is assumed to be known. Similar to the GMM example, the joint distribution under the fractional posterior framework with smoothing parameter $\omega$ is
\begin{equation}\label{joint_bmlr}
    p\left(\bY, \bZ, \bpi, \bet, \btau | \mathcal{X}\right) = p^{\omega}\left(\bY | \bZ, \bet, \mathcal{X}\right)p^{\omega}\left(\bZ | \bpi\right)p\left(\bpi\right)p\left(\bet | \btau\right)p\left(\btau\right).
\end{equation}
When we assume the factorization of mean-field variational approximation to be $q(\bZ, \bet, \btau, \bpi) = q(\bZ)q(\bpi, \bet)q(\btau)$, the corresponding $q^{*}$ that maximizes ELBO is 
\begin{equation}\label{elbo_bmlr}
    q^{*} = \underset{q \in \mathcal{Q}}{\arg\max} \; \int_{\Theta} q(\bZ)q(\bpi, \bet)q(\btau) \log \left(\dfrac{p\left(\bY, \bZ, \bpi, \bet, \btau | \mathcal{X}\right)}{q(\bpi, \bet)q(\btau)q^{\omega}(\bZ)}\right) d\bth,
\end{equation}
for $\bth = (\bZ, \bpi, \bet, \btau)$ in this example. 

Algorithm \ref{CAVI_bmlr} summarizes the procedure of estimating posterior parameters of this model. Similarly, to apply the grid-search TVB method with $N$ samples, $K$ clusters, and $p$-dimensional data on a grid $\boldsymbol{\omega} \in (0, 1]^{m}$ with $B$ bootstrap replicates, we require a dictionary $\mathfrak{D} \in \mathbb{R}^{\mathfrak{d} \times m}$ according to Algorithm~\ref{CAVI_bmlr}, where $$\mathfrak{d} = (B+1+1) \times \left(\underbrace{NK}_{q^{*}(\bZ)} + \underbrace{K}_{q^{*}(\bpi)} + \underbrace{2K}_{q^{*}(\btau)} + \underbrace{(p^2 + p)K}_{q^{*}(\bet)}\right).$$ This accounts for storing posterior parameters from the full data VB, VB on the first split dataset, and $B$ bootstrap replicates. Therefore, we obtain $\mathfrak{d} = (B+2) \times (N+p+p^2+3)K$. The detailed derivation as well as evaluation of ELBO can be found in supplemental materials.

\begin{algorithm}[h!]
\caption{Mean-Field Variational Approximation of Bayesian Mixture Linear Regression Model with Fractional Posterior}\label{CAVI_bmlr}
\begin{algorithmic}[1]

\REQUIRE Observation data $(\bY, \mathcal{X})$, number of clusters $K$, hyperparameters $\bal_{0} = (\alpha_{0}, \cdots, \alpha_{0})^{\prime}$, $a_{0}, b_{0}$, noise level $\lambda$, smoothing parameter $\omega$, maximum number of iterations $\mathrm{max\_iter}$, tolerance $\epsilon > 0$. Initialization: coverage = FALSE, posterior parameters $\left(\alpha_{k}^{(0)}, \bm_{k}^{(0)}, \bS_{k}^{(0)}, a_{k}^{(0)}, b_{k}^{(0)}\right)$ for $k = 1, \cdots, K$, iteration number $t = 0$.

\WHILE{$t < \mathrm{max\_iter}$ \textbf{and} \textbf{not} coverage}
    \STATE Update posterior parameters of $q^{*}(\bZ)$ by
    \begin{equation}\label{update_qz_bmlr}
        r_{nk} = \dfrac{\rho_{nk}}{\sum_{j = 1}^{K}\rho_{nj}},
    \end{equation}
    where $\log \rho_{nk} = \mathbb{E}_{\bpi}\left[\log \pi_{k}\right] - \dfrac{\lambda}{2}\mathbb{E}_{\bet}\left[(\by_{n} - \bX_{n}\bet_{k})^{\prime}(\by_{n} - \bX_{n}\bet_{k})\right]$

    \STATE For $k = 1, \cdots, K$, update posterior parameters of $q^{*}(\tau_{k})$ by
    \begin{equation}\label{update_qtau_bmlr}
        a_{k} = \dfrac{p}{2} + a_{0}, \quad b_{k} = \dfrac{1}{2}\mathbb{E}_{\bet}[\bet_{k}^{\prime}\bet_{k}] + b_{0}
    \end{equation}
    
    \STATE For $k = 1, \cdots, K$, update posterior parameters of $q^{*}(\pi_{k})$ by
    \begin{equation}\label{update_qpi_bmlr}
        \alpha_{k} = \alpha_{0} + \omega \cdot N_{k},
    \end{equation}
    where $N_{k} = \sum\limits_{n = 1}^{N} r_{nk}$

    \STATE For $k = 1, \cdots, K$, update posterior parameters of $q^{*}(\bet_{k})$ by
    \begin{equation}\label{update_qbeta_bmlr}
        \bm_{k} = \omega \cdot \lambda \bS_{k}\sum_{n = 1}^{N}r_{nk}\bX_{n}^{\prime}\by_{n}, \quad \bS_{k} = \left(\mathbb{E}_{\tau_{k}}[\tau_{k}]\bI_{p} + \omega \cdot \lambda \sum_{n = 1}^{N}r_{nk} \bX_{n}^{\prime}\bX_{n}\right)^{-1}
    \end{equation}
\ENDWHILE

\STATE Return $q^{*}(\bZ) = \prod\limits_{n = 1}^{N}\prod\limits_{k = 1}^{K} r_{nk}^{z_{nk}}$, $q^{*}(\bpi) = \mathrm{Dir}(\bpi | \bal)$, $q^{*}(\btau) = \prod\limits_{k = 1}^{K} \mathrm{Gamma}(\tau_{k} | a_{k}, b_{k})$ and $q^{*}(\bet) = \prod\limits_{k = 1}^{K} q^{*}(\bet_{k}) = \prod\limits_{k = 1}^{K} N\left(\bet_{k} | \bm_{k}, \bS_{k}\right)$.
\end{algorithmic}
\end{algorithm}

\section{Empirical Results}
\label{empirical}
In this section, we present numerical results for the two models discussed in Section \ref{example}. We demonstrate that our proposed TVB approach with either sequential update (strategy I) or grid-search update (strategy II) achieves superior calibration of credible intervals compared to original VB estimation. 

In Section \ref{numerical_gmm}, we evaluate our method using both updating strategies and compare it against the standard variational Bayes approach (corresponding to $\omega = 1$ in \eqref{gibbs_post}) for Gaussian mixture model. For the Bayesian mixture linear regression example in Section \ref{numerical_bmlr}, we implement only the grid-search TVB and compare its coverage with standard VB results, as sequential update TVB yields similar coverage results to grid-search TVB. Additionally, for both examples, we compare the TVB calibration with the general posterior calibration (GPC) method  \citep{syring2019calibrating}. As discussed earlier, the original GPC method is not specifically designed for VB, and its calibration approach does not account for the use of fractional VB in the presence of latent variables. We adapt it to incorporate the fractional VB as we found that this adapted version outperforms standard GPC in our setting. Therefore, we report  results for this `fractional GPC', referred to hereafter as GPC. The comparative results between TVB and GPC are presented in the supplementary materials.

For grid-search TVB, we construct a grid of 100 points $\boldsymbol{\omega} = (\omega_{1}, \cdots, \omega_{100})^{\prime}$ by evenly spacing values between $\log 0.001$ and $\log 1$ on the logarithmic scale and then applying the exponential function. For each model and method, we conducted $n = 500$ independent replications to obtain the frequentist coverage probability. 

\subsection{Numerical Study using Gaussian Mixture Model}
\label{numerical_gmm}
For Gaussian Mixture Model described in Section~\ref{gmm}, we use $p = 2$ covariates and $K = 2$ Gaussian clusters $N\left(\bmu_{1}, \Sigma_{1}\right)$ and $N\left(\bmu_{2}, \Sigma_{2}\right)$, where $\bmu_{1} = (0, 0)^{\prime}, \bmu_{2} = (2, 2)^{\prime}, \Sigma_{1} = \Sigma_{2} = \bI_{2}$. We set the mixing parameter $\pi = 0.65$, and vary the sample size $N \in \{1000, 1500, 2000, 2500, 3000\}.$ 

For the traditional VB method, we implement Algorithm \ref{CAVI_gmm} with $\omega = 1$ for the variational approximation of posterior distribution. For calibrated methods (TVB with both updating strategies), we first apply Algorithm \ref{ssb_cal} and Algorithm \ref{ssb_grid} with nominal level $1 - \alpha = 0.95$ to select the proper smoothing parameter $\omega_{0}$, and then use Algorithm \ref{CAVI_gmm} to obtain the variational posteriors and credible intervals.

In this example, we investigate the frequentist coverage performance of estimated mixing parameter $\pi = 0.65$. Since the variational posterior of $\bpi = (\pi, 1 - \pi)^{\prime}$ follows the Dirichlet distribution with posterior parameter $\bal = (\alpha_{1}, \alpha_{2})^{\prime}$, which reduces to that  $\pi$ follows the beta distribution with parameters $(\alpha_{1}, \alpha_{2})$. The corresponding $100(1 - \alpha)\% = 95\%$ credible interval is constructed by $\left[F_{\mathrm{Beta}(\alpha_{1}, \alpha_{2})}^{-1}\left(0.5\alpha\right), F_{\mathrm{Beta}(\alpha_{1}, \alpha_{2})}^{-1}\left(1 - 0.5\alpha\right)\right]$ where $F$ is the corresponding cumulative distribution function. Note that in view of the label switching issue, we always let $\hat{\pi} = \max(\hat{\pi}_{1}, \hat{\pi}_{2})$, where $\hat{\bpi} = (\hat{\pi}_{1}, \hat{\pi}_{2})^{\prime}$ is the posterior mean.

\begin{figure}[ht!]
    \centering
    \includegraphics[width = 14cm]{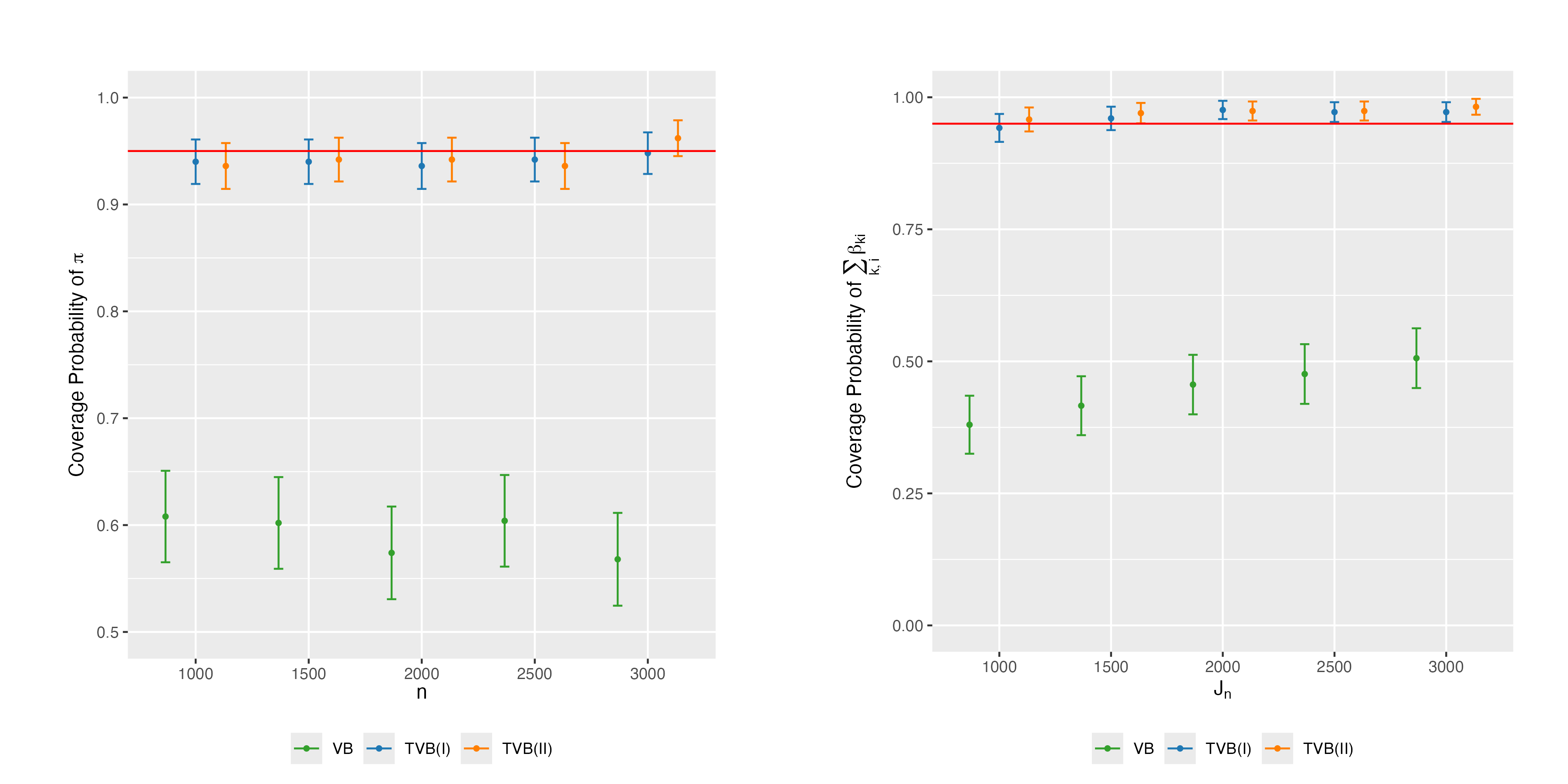}
    \caption{Frequentist coverage performance for original VB method (VB) and trustworthy VB method (TVB). The left panel displays the coverage of the mixing parameter $\pi$ in the GMM example, comparing TVB with both updating strategies. The right panel presents the coverage of the sum of regression coefficients, $\sum\limits_{k=1}^{K}\sum\limits_{i=1}^{p}\beta_{ki}$ in the BMLR example, using TVB with grid-search update only. In each panel, dot plots with error bars indicate the average coverage and the 95\% confidence interval.}
    \label{fig:plot_cover}
\end{figure}

\begin{figure}[ht!]
    \centering
    \includegraphics[width = 14cm]{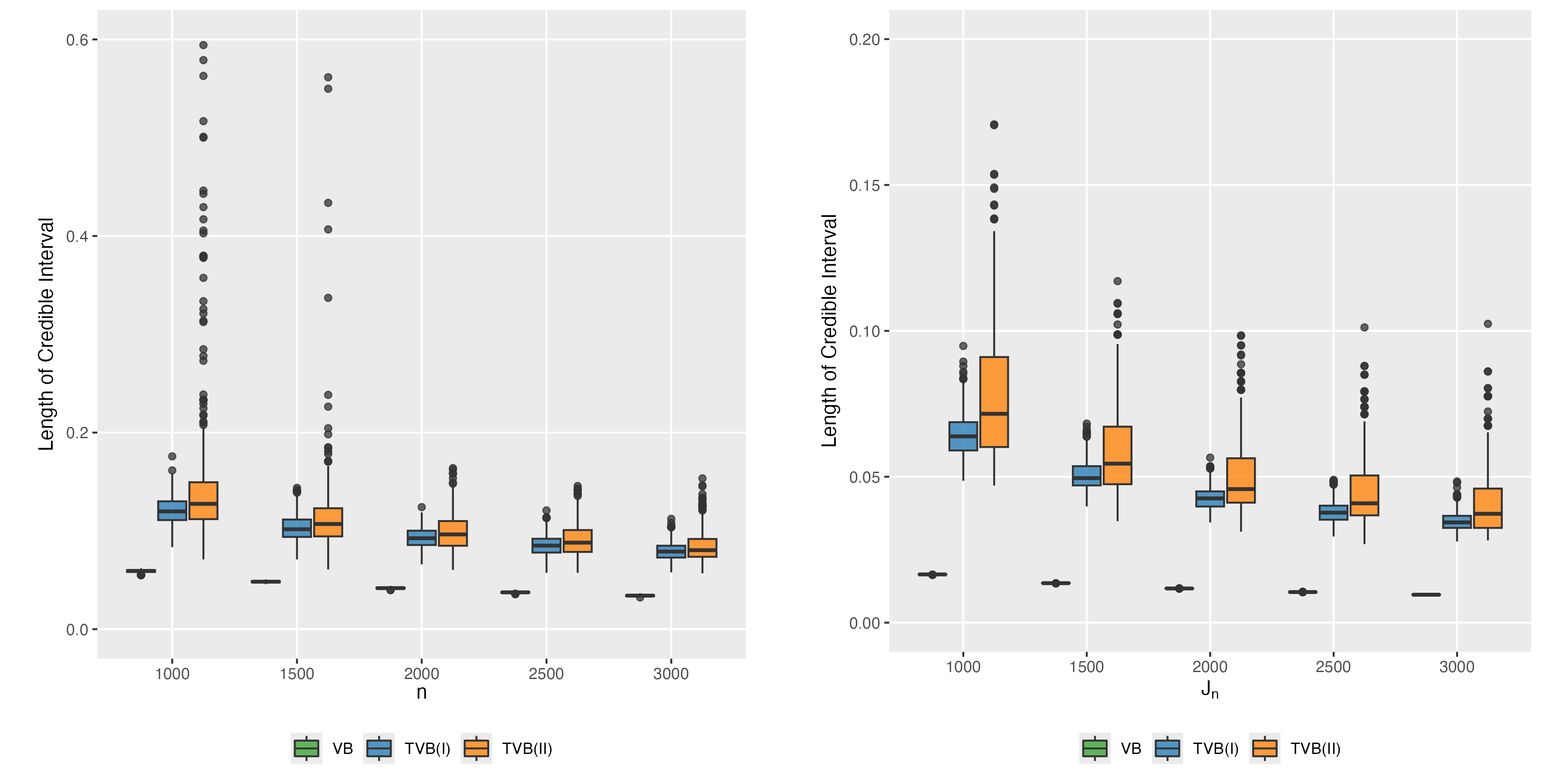}
    \caption{Boxplots of credible interval length for original VB method (VB) and trustworthy VB method (TVB). The left panel is for $\pi$ in the GMM example, comparing TVB with both updating strategies; the right panel is for $\sum\limits_{k = 1}^{K}\sum\limits_{i = 1}^{p}\beta_{ki}$ in the BMLR example, using TVB with grid-search update only.}
    \label{fig:plot_length}
\end{figure}

\begin{figure}[ht!]
    \centering
    \includegraphics[width = 14cm]{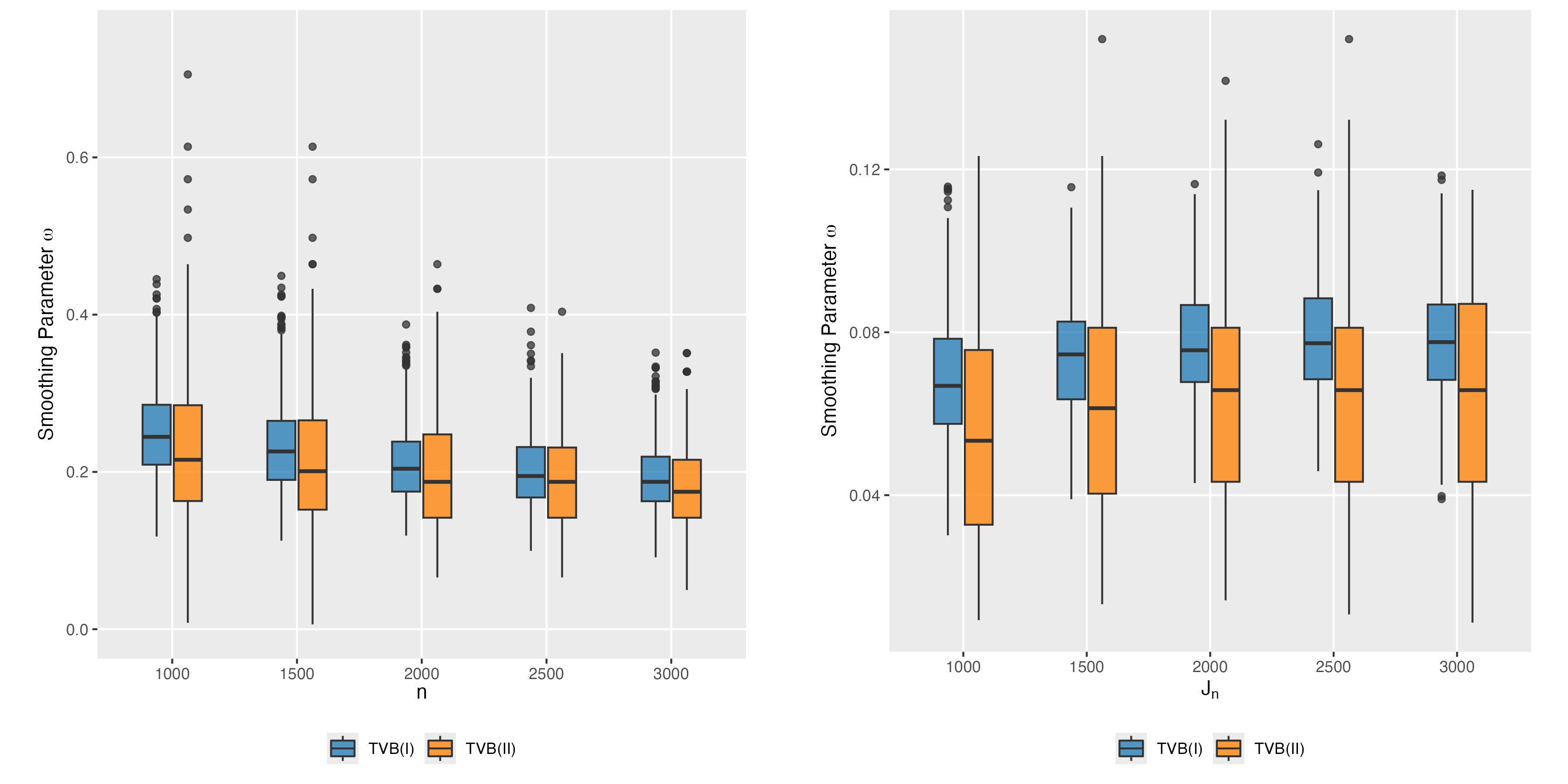}
    \caption{Boxplots of the selected smoothing parameter $\omega$ for trustworthy VB method (TVB). The left panel is for the GMM example comparing TVB with both updating strategies, while the right panel is for the BMLR example using TVB with grid-search update only.}
    \label{fig:plot_omega}
\end{figure}

As shown in the left panel of Figure \ref{fig:plot_cover}, the TVB with both updating strategies successfully calibrates credible intervals to the nominal $95\%$ level across varying sample sizes. This calibration performance demonstrates similar accuracy and robustness of both updating strategies, while the standard VB approach consistently exhibits substantial undercoverage. As expected, credible interval of the TVB method is larger than that of the original VB ones; see Figure \ref{fig:plot_length} (left panel) and \ref{fig:plot_omega} (left panel) for the boxplot of selected $\omega_{0}$, respectively. The VB running time for this simulation when $N = 3000$ is 4.52 seconds (averaged over 20 repeats), demonstrating that our TVB method is scalable given sufficient computational resources.

\subsection{Numerical Study of Bayesian Mixture Linear Model}
\label{numerical_bmlr}
Consider the Bayesian mixture linear model discussed in Section \ref{bmlr}, we set $N = 500$ datasets with the observation of each dataset $(\by_{i}, \bX_{i})$ to be $J_{n} = (1000, 1500, 2000, 2500, 3000)$. We again let $K = 2$ as the number of mixed clusters with mixing probability $\pi = 0.65$, and for each cluster, the regression coefficient $\bet_{k} \in \mathbb{R}^{p}$ for $k = 1, 2$ is generated from $N\left(\begin{pmatrix} 0\\0 \end{pmatrix}, \begin{pmatrix}
    \tau_{k}^{-1} & 0\\ 0 & \tau_{k}^{-1}
\end{pmatrix}\right)$, where $p = 2$ and $\tau_{1} = \tau_{2} = 1$. Also, we generate noise $\boldsymbol{\epsilon}_{k} \in \mathbb{R}^{J_{n}}, k = 1, 2$ as the noise term of each cluster by $\epsilon_{kj} \overset{i.i.d.}{\sim} N(0, 1), j = 1, \cdots, J_{n}$. For each $i = 1, \cdots, N$, we first obtain the membership matrix $\bZ$ randomly based on $\pi$, suppose $i$th dataset $(\by_{i}, \bX_{i})$ belongs to $k$th cluster for $k = 1, 2$. To get such dataset, we let $\bX_{i} \in \mathbb{R}^{J_{n} \times p}$ be from iid $N(0, 1)$, and then $\by_{i} = \bX_{i}\bet_{k} + \tilde{\boldsymbol{\epsilon}}_{k}$, where $\tilde{\boldsymbol{\epsilon}}_{k} = \dfrac{\|\bX_{i}\bet_{k}\|_{2}^{2}}{\mathrm{SNR}} \cdot \dfrac{\boldsymbol{\epsilon}_{k}}{\|\boldsymbol{\epsilon}_{k}\|_{2}^{2}}$ for a prespecified signal-to-noise ratio (SNR). Throughout our simulation, we let the signal-to-noise ratio to be $0.1$.

For this example, we want to build $100(1 - \alpha)\% = 95\%$ credible interval for $\sum\limits_{k = 1}^{K}\sum\limits_{i = 1}^{p}\beta_{ki}$. According to Algorithm \ref{CAVI_bmlr}, the posterior distribution of $\bet_{k}$ is Gaussian with mean $\bm_{k}$ and variance $\bS_{k}$, and therefore we know the posterior distribution of $\sum\limits_{k = 1}^{K}\sum\limits_{i = 1}^{p}\beta_{ki}$ should be $N(\sum\bm, \sum\bS)$, where $\sum\bm, \sum\bS$ represents the summation of all elements of $\bm_{k}, \bS_{k}$ across $k$, respectively. Therefore, the $100(1 - \alpha)\%$ credible interval of $\sum\limits_{k = 1}^{K}\sum\limits_{i = 1}^{p}\beta_{ki}$ is constructed by $\left[\sum\bm - z_{1-0.5\alpha}\sqrt{\sum\bS}, \sum\bm + z_{1-0.5\alpha}\sqrt{\sum\bS}\right]$ where $z_{1-0.5\alpha}$ is $1 - 0.5\alpha$ quantile of $N(0, 1)$.

It is shown in the right panel of Figure \ref{fig:plot_cover} that again, TVB method with grid-search update successfully calibrates credible intervals to the nominal level, and has better and more robust performance than traditional VB. The length of credible interval for TVB method is larger than original VB ones, as shown in Figure \ref{fig:plot_length} (right panel). The boxplot of selected $\omega_{0}$ values is shown in \ref{fig:plot_omega} (right panel). This finding aligns with our previous results from the Gaussian mixture model example. The VB running time in this simulation when $N = 500, J_{n} = 3000$ is 0.491 seconds (averaged over 20 repeats), demonstrating that our TVB method is scalable given sufficient computational resources.

\section{Real Data Analysis}
\label{real_data}

In this section, we apply grid-search TVB method based on Gaussian Mixture Models (GMM) to analyze the \texttt{faithful} dataset from R's \texttt{datasets} package \cite{R_datasets}. The \texttt{faithful} dataset contains 272 observations of the Old Faithful geyser in Yellowstone National Park, with two variables: \texttt{eruptions} (the duration of eruptions in minutes) and \texttt{waiting} (the waiting time until the next eruption in minutes) \citep{azzalini1990look}. Our analytical approach involved fitting a two-component GMM to identify the underlying bimodal distribution structure of the geyser's behavior, facilitating the classification of eruptions into short/long categories and corresponding waiting times. The analysis revealed distinct clusters in the eruption patterns, with longer eruptions typically followed by longer waiting periods, suggesting a potential physical mechanism related to the geyser's underground reservoir refilling process. As shown in Figure \ref{fig:realdata} left panel, the visualization of these clusters, obtained by VB with $K = 2$, clearly demonstrates the bimodal nature of Old Faithful's eruption cycles, with the GMM effectively capturing the natural groupings in the data.

\begin{figure}[ht!]
    \centering
    \includegraphics[width = 14cm]{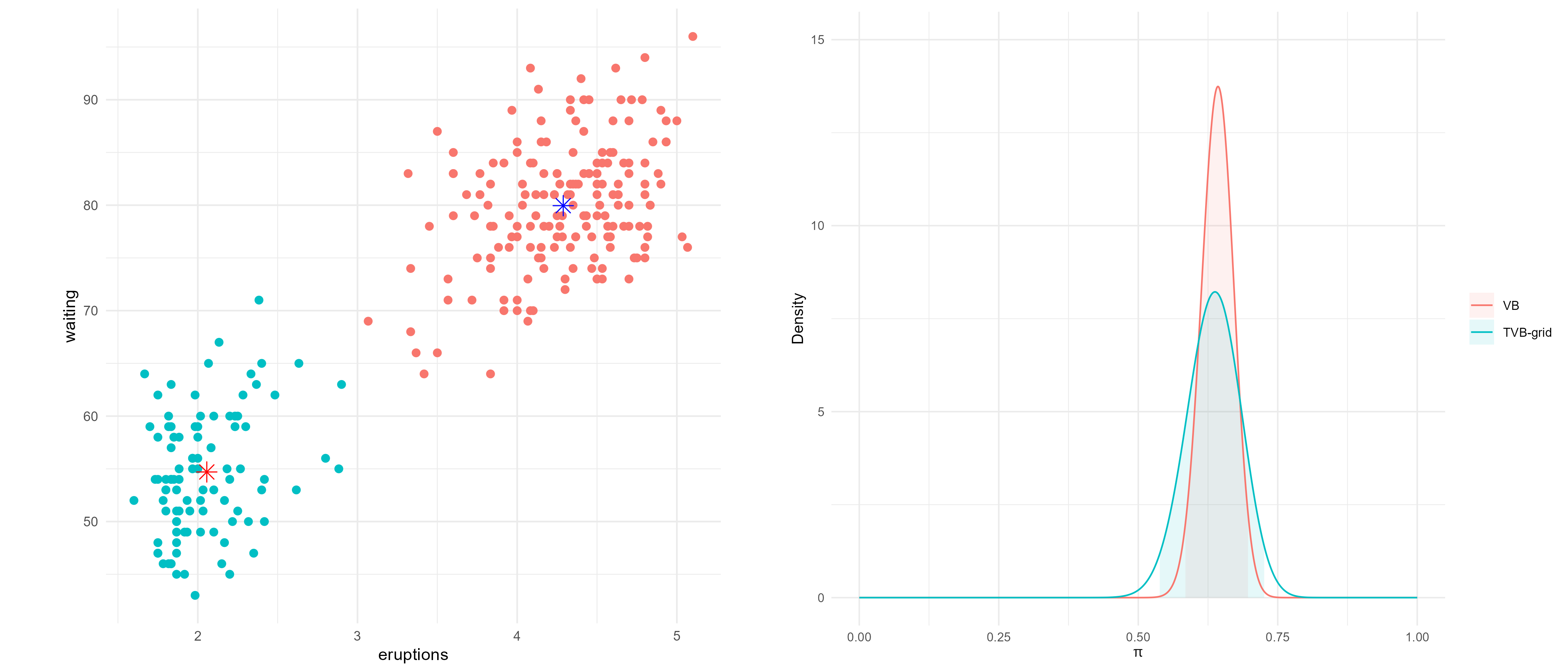}
    \caption{TVB recalibration of the mixing parameter for the \texttt{faithful} dataset. The left panel is \texttt{faithful} dataset clustered by variational Bayes with $K = 2$; The right plot is the posterior distribution of VB and grid-search TVB of mixing parameter $\pi$ with 95\% credible intervals.}
    \label{fig:realdata}
\end{figure}

First, we implement original variational Bayes and grid-search TVB with $K = 2$ for mixing probability $\pi$ (for the larger cluster), 
where we use a grid of 500 points $\boldsymbol{\omega} = (\omega_{1}, \cdots, \omega_{500})^{\prime}$ by evenly spacing values between $\log 0.001$ and $\log 1$ on the logarithmic scale and then applying the exponential function. We build the 95\% credible interval for mixing probability under each method: The VB 95\% CI is $(0.584, 0.698)$ while the grid-search TVB 95\% CIs is $(0.538, 0.726)$. As we can see, after TVB calibration, the 95\% credible interval is wider which increases the trustworthiness of the corresponding uncertainty quantification. Furthermore, the grid-search TVB approach demonstrates significant computational efficiency, as it requires only a single bootstrap estimation procedure, with results stored as a comprehensive "dictionary" for subsequent uncertainty quantification of any target parameter. In contrast, the sequential update TVB necessitates repeated bootstrap simulations whenever inference on different parameters is required, resulting in substantially higher computational overhead.

To better illustrate the computational efficiency of grid-search TVB, we next obtain calibrated 95\% credible intervals for $\bmu_{ij}, i, j = 1, 2$ and $\sum\limits_{j = 1}^{2}\bmu_{ij}, i = 1, 2$, where $\bmu_{ij}$ is the $j$-th coordinate of the mean vector of the $i$-th cluster. Recall that in Algorithm \ref{CAVI_gmm}, the posterior distribution of $\bmu_{i}$ conditional on $\bLam_{i}$ follows a Gaussian-Wishart distribution. Therefore, following \cite{murphy2007conjugate}, the marginal distribution of $\bmu_{i}$ is a multivariate $t$-distribution with degrees of freedom $\nu_{i} - p + 1$, location vector $\bm_{i}$, and scale matrix $\frac{\bW_{i}^{-1}}{\beta_{i}(\nu_{i} - p + 1)}$. As shown in Table \ref{table:real_data}, for each parameter of interest, we obtain calibrated 95\% credible intervals that are wider than or equal to the original VB intervals, demonstrating substantial improvements in coverage properties for all target parameters relative to the original VB approach. Additionally, by using the TVB dictionary constructed in the grid-search TVB method, we avoid having to re-run the entire update procedure (as required by the sequential TVB method) for different target parameters, making grid-search TVB more efficient and therefore recommended.

\begin{table}[ht!]
\centering
\begin{tabular}{c c c c}
\hline
& VB & TVB & $\omega$ \\
\hline
$\bmu_{11}$ & (4.22, 4.35) & (4.22, 4.35) & 1.000 \\
\hline
$\bmu_{12}$ & (79.06, 80.85) & (78.90, 81.01) & 0.717 \\
\hline
$\bmu_{21}$ & (1.98, 2.13) & (1.90, 2.35) & 0.265 \\
\hline
$\bmu_{22}$ & (53.49, 55.91) & (52.01, 60.37) & 0.127 \\
\hline
$\sum\limits_{j = 1}^{2}\bmu_{1j}$ & (83.33, 85.16) & (83.17, 85.31) & 0.737 \\
\hline
$\sum\limits_{j = 1}^{2}\bmu_{2j}$ & (55.52, 58.00) & (55.05, 61.81) & 0.124 \\
\hline
\end{tabular}
\caption{95\% credible intervals for variational Bayes (VB) and grid-search TVB for different target parameters.}
\label{table:real_data}
\end{table}

\section{Discussion}
\label{discussion}

This paper introduces a novel approach to recalibrate credible intervals in variational Bayes inference through fractional variational Bayes and sample splitting bootstrap. Our method successfully addresses the undercoverage issue common in VB posterior approximations while maintaining computational efficiency. The proposed method demonstrates superior empirical performance compared to standard VB, and we provide theoretical guarantees for the coverage probability of calibrated VB credible intervals. Overall, our work enables trustworthy VB with strong empirical and theoretical guarantees. 

There are a few future research directions building on our work. First, 
while our method achieves guaranteed calibration in moderate to large samples, there is increased variance in small-sample settings due to the sample splitting procedure. However, this limitation is well-understood given the inherent variance-bias trade-off: sample splitting method will increase variance when we try to reduce the bias. Future work could explore adaptive splitting ratios or alternative resampling schemes that are particularly well suited for small samples. 
 
In addition, our method can be extended to recalibrate other approximating methods with misspecified models, as well as complex data structures with temporal or spatial dependencies.

\section{Proof}
\label{proof}
In this section, we present the proofs of Theorem~\ref{converge}, Corollary~\ref{cor:uniform}, and Theorem~\ref{cor_coverage}. We begin with a simple technical result:
\begin{lem}
\label{lemma_smooth}
    For random variable $X \sim f_{X}(x)$, suppose $\sigma(z) = \dfrac{1}{1 + e^{-kz}}$ is sigmoid function. Then for any $\theta \in \mathbb{R}$, we have
    \begin{equation}
        \mathbb{E}_{f_X}\left[\sigma(X - \theta)\right] \rightarrow \mathbb{P}\left(\theta < X\right),
    \end{equation}
    as $k \rightarrow \infty$. Therefore, $\sigma(X - \theta)$ is a smooth surrogate of the indicator function $\mathbbm{1}\{\theta < X\}$.
\end{lem}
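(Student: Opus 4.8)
The plan is to invoke the dominated convergence theorem after establishing pointwise convergence of the integrand. First I would fix $\theta \in \mathbb{R}$ and examine $\sigma(x - \theta) = (1 + e^{-k(x-\theta)})^{-1}$ as $k \to \infty$ for each fixed $x$. If $x > \theta$, then $-k(x-\theta) \to -\infty$, so $e^{-k(x-\theta)} \to 0$ and $\sigma(x-\theta) \to 1$; if $x < \theta$, then $-k(x-\theta) \to +\infty$, so $e^{-k(x-\theta)} \to \infty$ and $\sigma(x-\theta) \to 0$. Hence $\sigma(x-\theta) \to \mathbbm{1}\{x > \theta\}$ for every $x \neq \theta$. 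Since $X$ admits a density $f_X$, the singleton $\{\theta\}$ is a $\mathbb{P}$-null set, so this convergence holds $\mathbb{P}$-almost surely (at $x = \theta$ the integrand equals $\sigma(0) = \tfrac12$ for all $k$, but this point contributes nothing to the expectation).

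Next I would supply a dominating function: for all $k > 0$ and all $x$ one has $0 \le \sigma(x-\theta) \le 1$, and the constant function $1$ is integrable against the law of $X$. The dominated convergence theorem then yields
\begin{equation}
\mathbb{E}_{f_X}\left[\sigma(X - \theta)\right] \longrightarrow \mathbb{E}_{f_X}\left[\mathbbm{1}\{X > \theta\}\right] = \mathbb{P}(\theta < X),
\end{equation}
as $k \to \infty$, which is the claimed limit. The final sentence of the statement — that $\sigma(X-\theta)$ is a smooth surrogate for $\mathbbm{1}\{\theta < X\}$ — is then immediate, since $\sigma$ is infinitely differentiable while the display above shows it reproduces the indicator's expectation in the limit.

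There is essentially no serious obstacle in this argument; the only point requiring a word of care is the value at $x = \theta$, where $\sigma(0) = \tfrac12$ does not converge to $\mathbbm{1}\{\theta < \theta\} = 0$. This is harmless because $f_X$ is assumed to be a density, so $\mathbb{P}(X = \theta) = 0$; were $X$ to carry an atom at $\theta$, the limit would instead acquire an extra $\tfrac12\,\mathbb{P}(X = \theta)$ term, so the stated conclusion implicitly relies on the continuity of the distribution of $X$ at $\theta$.
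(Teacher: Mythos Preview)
Your argument is correct and follows essentially the same route as the paper: establish pointwise convergence of $\sigma(x-\theta)$ to $\mathbbm{1}\{x>\theta\}$ and pass to the limit under the expectation. You are in fact more careful than the paper, which omits the explicit appeal to dominated convergence and the null-set remark at $x=\theta$.
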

\begin{proof}[Proof of Lemma~\ref{lemma_smooth}]
    For $\sigma(X - \theta) = \dfrac{1}{1 + e^{-k(X - \theta)}}$, as $k \rightarrow \infty$, if $X > \theta$, then $\sigma(X - \theta) \rightarrow 1$, and if $X < \theta$, then $\sigma(X - \theta) \rightarrow 0$. Therefore, we have
    \begin{equation}
        \mathbb{E}_{f_X}\left[\sigma(X - \theta)\right] = \int_{-\infty}^{\infty}\dfrac{1}{1 + e^{-k(x - \theta)}}f_{X}(x)dx \rightarrow \int_{\theta}^{\infty}f_{X}(x)dx = \mathbb{P}\left(\theta < X\right),
    \end{equation}
    as $k \rightarrow \infty$. This completes of the proof. 
\end{proof}

\begin{proof}[Proof of Theorem \ref{converge}] For $X_{1}, \cdots, X_{n} \overset{i.i.d.}{\sim} P_{\bth}$, denote the empirical measure by $\mathbb{P}_{n} = \dfrac{1}{n}\sum\limits_{i = 1}^{n}\delta_{X_{i}}$, where $\delta_{X_{i}}$ is the Dirac delta measure at $X_{i}$. After splitting the sample into $\bX_{1}$ and $\bX_{2}$, let $X_{2, 1}^{(b)}, \cdots, X_{2, l}^{(b)}$ be bootstrap samples from $\bX_{2}$ with $l = n/2$ (without loss of generality, let $n$ be an even integer), and denote the corresponding empirical measure by $\mathbb{P}_{l}^{(b)}$. The associated bootstrap empirical process is
\begin{equation}
    \hat{\mathbb{G}}_{n, l} := \sqrt{l}\left(\mathbb{P}_{l}^{(b)} - \mathbb{P}_{n}\right) = \dfrac{1}{\sqrt{l}}\sum_{i = 1}^{n}\left(N_{li} - \dfrac{l}{n}\right)
\end{equation}
as in \cite{wellner2013weak}, where $N_{li}$ is the random variable counting the times $X_{i}$ is resampled. For any measurable $g$,
$\hat{\mathbb{G}}_{n,l}g = l^{-1/2}\sum_{i = 1}^{n}(N_{li} - l/n)g(X_i)$.

Recall that $T_{\bth}(\mathbb{P}_{n})$ and $T_{\bth}(\mathbb{P}_{l}^{(b)})$ are operators on the Donsker class $\mathcal{F} := \left\{f_{\omega}(\cdot): \omega \in (0, 1]\right\}$. Let $l^{\infty}(\mathcal{F})$ be the space of bounded real-valued functions on $\mathcal{F}$, equipped with the norm
$\|T\|_{\mathcal{F}} := \sup_{f_{\omega}\in\mathcal{F}} |T(f_{\omega})|$, for $T \in l^{\infty}(\mathcal{F})$. 
For either $\mathbb{P}_{n}$ or $\mathbb{P}^{(b)}_{l}$ and any $f_{\omega} \in \mathcal{F}$, we have $T_{\bth}(\mathbb{P}_{n})f_{\omega}, T_{\bth}(\mathbb{P}^{(b)}_{l})f_{\omega} \in (0, 1]$. Therefore, with fixed $\bth$, $T_{\bth}(\mathbb{P}_{n}), T_{\bth}(\mathbb{P}^{(b)}_{l}) \in l^{\infty}(\mathcal{F})$.

Define $\psi:l^{\infty}(\mathcal{F})\to\mathbb{R}$ by
\[
\psi(T) := \sup_{f_{\omega}\in\mathcal{F}} |T(f_{\omega})|.
\]
Then $\psi(T)\in(0,1]$ for the operators of interest, and for any
$T,\tilde T\in l^{\infty}(\mathcal{F})$,
\[
\bigl|\psi(T) - \psi(\tilde T)\bigr|
=
\Bigl|\sup_{f_{\omega}\in\mathcal{F}}|T(f_{\omega})|
      - \sup_{f_{\omega}\in\mathcal{F}}|\tilde T(f_{\omega})|\Bigr|
\le
\sup_{f_{\omega}\in\mathcal{F}}|T(f_{\omega}) - \tilde T(f_{\omega})|
= \|T - \tilde T\|_{\mathcal{F}}.
\]
Thus $\psi$ is a bounded Lipschitz functional on $l^{\infty}(\mathcal{F})$.

By Assumption \ref{Donsker} and Assumption \ref{hadamard}, according to Theorem 3.7.3 and Theorem 3.10.11 in \cite{wellner2013weak}, we have
\begin{equation}
    \left|\mathbb{E}\psi\left(\sqrt{l}\left(T_{\bth}(\mathbb{P}_{l}^{(b)}) - T_{\bth}(\mathbb{P}_{n})\right)\right) - \mathbb{E}\psi\left(T_{\bth, P_{\bth}}^{\prime}(\mathbb{G})\right)\right| \overset{P}{\rightarrow} 0,
\end{equation}
where $\mathbb{G}$ is a tight Brownian bridge with mean 0 and $T'_{\bth,P_{\bth}}$ is the Hadamard derivative in
Assumption~\ref{hadamard}.

Since $T'_{\bth,P_{\bth}}(\mathbb{G})$ is a tight mean-zero Gaussian process
in $l^{\infty}(\mathcal{F})$, the random variable
$\psi\bigl(T'_{\bth,P_{\bth}}(\mathbb{G})\bigr)
= \sup_{f_{\omega}\in\mathcal{F}}|T'_{\bth,P_{\bth}}(\mathbb{G})f_{\omega}|$
has finite expectation, so
$\mathbb{E}\,\psi\bigl(T'_{\bth,P_{\bth}}(\mathbb{G})\bigr) < \infty$ and hence
$\mathbb{E}\,\psi\bigl(\sqrt{l}\{T_{\bth}(\mathbb{P}_{l}^{(b)}) - T_{\bth}(\mathbb{P}_{n})\}\bigr)
= O_{P}(1)$, that is,
\[
\mathbb{E}
\sup_{f_{\omega}\in\mathcal{F}}
\Bigl|
\sqrt{l}\bigl\{T_{\bth}(\mathbb{P}_{l}^{(b)}) f_{\omega}
            - T_{\bth}(\mathbb{P}_{n}) f_{\omega}\bigr\}
\Bigr|
= O_{P}(1). 
\]
Since $l = n/2 \rightarrow \infty$,

the preceding display yields 
\begin{equation}
    \left|\mathbb{E}\underset{f_{\omega} \in \mathcal{F}}{\sup}\left|\left(T_{\bth}(\mathbb{P}_{l}^{(b)})f_{\omega} - T_{\bth}(\mathbb{P}_{n})f_{\omega}\right)\right|\right| \overset{P}{\rightarrow} 0.
\end{equation}
This, together with the triangle inequality, implies that $\left|\left(\mathbb{E}T_{\bth}(\mathbb{P}_{l}^{(b)})f_{\omega} - \mathbb{E}T_{\bth}(\mathbb{P}_{n})f_{\omega}\right)\right| \overset{P}{\rightarrow} 0$ for any $f_{\omega} \in \mathcal{F}$. Invoking Lemma \ref{lemma_smooth} leads to that $\mathbb{E}T_{\bth}(\mathbb{P}_{l}^{(b)})f_{\omega} \rightarrow \mathbb{P}\left(h(\bth) \in \left(L\left(\mathbb{P}_{l}^{(b)}\right)f_{\omega}, U\left(\mathbb{P}_{l}^{(b)}\right)f_{\omega}\right)\right)$ when $k \rightarrow \infty$. Therefore we have shown that 
\begin{equation}
    \left|\mathbb{P}\left(h(\bth) \in \left(L\left(\mathbb{P}_{l}^{(b)}\right)f_{\omega}, U\left(\mathbb{P}_{l}^{(b)}\right)f_{\omega}\right)\right) - \mathbb{P}\left(h(\bth) \in \left(L\left(\mathbb{P}_{n})f_{\omega}, U(\mathbb{P}_{n}\right)f_{\omega}\right)\right)\right| \overset{P}{\rightarrow} 0,
\end{equation}
for $\forall f_{\omega} \in \mathcal{F}$ and as $n \rightarrow \infty$.

For any fixed empirical measure $\mathbb{P}_{n}$, since both the sigmoid function $\sigma(\cdot)$ and $h(\cdot)$ are continuous functions, $T_{\bth}(\mathbb{P}_{n})f_{\omega}$ is also a continuous function of $\bth$. Let the variational posterior mean obtained using $\bX_{1}$ be $\hat{\bth}^{\bX_{1}}_{\omega}$. According to Assumption \ref{assump_consistency}, there holds $h(\hat{\bth}^{\bX_{1}}_{\omega}) \rightarrow h(\bth)$ in probability as $n \rightarrow \infty$. Then it follows that $\left|T_{\bth}(\mathbb{P}_{l}^{(b)})f_{\omega} - T_{\hat{\bth}^{\bX_{1}}_{\omega}}(\mathbb{P}_{l}^{(b)})f_{\omega}\right| = o_{p}(1)$, and consequently as $n \rightarrow \infty$, for any $f_{\omega} \in \mathcal{F}$, we have 
\begin{equation}\left|\mathbb{E}T_{\hat{\bth}^{\bX_{1}}_{\omega}}(\mathbb{P}_{l}^{(b)})f_{\omega} - \mathbb{E}T_{\bth}(\mathbb{P}_{n})f_{\omega}\right| \overset{P}{\rightarrow} 0, 
\end{equation}
i.e., 
\begin{equation}\left|\mathbb{P}\left(h(\hat{\bth}^{\bX_{1}}_{\omega}) \in \left(L\left(\mathbb{P}_{l}^{(b)}\right)f_{\omega}, U\left(\mathbb{P}_{l}^{(b)}\right)f_{\omega}\right)\right) - \mathbb{P}\left(h(\bth) \in \left(L\left(\mathbb{P}_{n})f_{\omega}, U(\mathbb{P}_{n}\right)f_{\omega}\right)\right)\right| \overset{P}{\rightarrow} 0. 
\end{equation}
By the definitions of $c_{q^{*}}^{B}(\omega; \alpha)$ and $c_{q^{*}}(\omega; \alpha)$, this yields
\begin{equation}
    c_{q^{*}}^{B}(\omega; \alpha) \overset{P}{\rightarrow} c_{q^{*}}(\omega; \alpha), \quad n \rightarrow \infty.
\end{equation}

Finally, since $\hat{c}_{q^{*}}^{B}(\omega; \alpha)$ defined in \eqref{eq:c^B_q} is the sample version of $c_{q^{*}}^{B}(\omega; \alpha)$ with $\left|\hat{c}_{q^{*}}^{B}(\omega; \alpha) - c_{q^{*}}^{B}(\omega; \alpha)\right| = O_{p}(B^{-1/2})$, by weak law of large number, we have
\begin{equation}
    \hat{c}_{q^{*}}^{B}(\omega; \alpha) \overset{P}{\rightarrow} c_{q^{*}}(\omega; \alpha), \quad n, B \rightarrow \infty.
\end{equation} 
This completes the proof of Theorem \ref{converge}.
\end{proof}

\begin{proof}[Proof of Corollary~\ref{cor:uniform}] The proof of Corollary~\ref{cor:uniform} follows a similar argument to that of Theorem~\ref{converge}. From the proof of Theorem~\ref{converge}, we have $\left|\mathbb{E}\underset{f_{\omega} \in \mathcal{F}}{\sup}\left|\sqrt{l}\left(T_{\bth}(\mathbb{P}_{l}^{(b)})f_{\omega} - T_{\bth}(\mathbb{P}_{n})f_{\omega}\right)\right|\right| \overset{P}{\rightarrow} 0$. With Assumption~\ref{assump:uniform}, it is straightforward that $h(\hat{\bth}^{\bX_{1}}_{\omega}) \rightarrow h(\bth)$ in probability as $n \rightarrow \infty$ uniformly over $\omega \in \Omega$, and thus 
\begin{equation*}\underset{\omega \in \Omega}{\sup}\left|\mathbb{P}\left(h(\hat{\bth}^{\bX_{1}}_{\omega}) \in \left(L\left(\mathbb{P}_{l}^{(b)}\right)f_{\omega}, U\left(\mathbb{P}_{l}^{(b)}\right)f_{\omega}\right)\right) - \mathbb{P}\left(h(\bth) \in \left(L\left(\mathbb{P}_{n})f_{\omega}, U(\mathbb{P}_{n}\right)f_{\omega}\right)\right)\right| \overset{P}{\rightarrow} 0. 
\end{equation*}
By the definitions of $c_{q^{*}}^{B}(\omega; \alpha)$ and $c_{q^{*}}(\omega; \alpha)$, this yields
\begin{equation}
    \underset{\omega \in \Omega}{\sup}\left|c_{q^{*}}^{B}(\omega; \alpha) - c_{q^{*}}(\omega; \alpha) \right| \overset{P}{\rightarrow} 0, \quad n \rightarrow \infty.
\end{equation}
Since $\hat{c}_{q^{*}}^{B}(\omega; \alpha)$ defined in \eqref{eq:c^B_q} is the sample version of $c_{q^{*}}^{B}(\omega; \alpha)$ with $\left|\hat{c}_{q^{*}}^{B}(\omega; \alpha) - c_{q^{*}}^{B}(\omega; \alpha)\right| = O_{p}(B^{-1/2})$, by weak law of large number, we have
\begin{equation}
    \underset{\omega \in \Omega}{\sup}\left|\hat{c}_{q^{*}}^{B}(\omega; \alpha) - c_{q^{*}}(\omega; \alpha)\right| \overset{P}{\rightarrow} 0, \quad n, B \rightarrow \infty.
\end{equation} 
This completes the proof of Corollary~\ref{cor:uniform}.
\end{proof}

\begin{proof}[Proof of Theorem~\ref{cor_coverage}]
    In proving Theorem~\ref{cor_coverage}, we denote the data-dependent smoothing parameter as $\hat{\omega}_{n}$ (instead of $\hat{\omega}_{0}$) to emphasize its dependence on the sample. Recall that we obtain $\hat{\omega}_{n}$ as an $M$-estimator by solving \eqref{eq:optimize}. From Corollary~\ref{cor:uniform}, we know that $\hat{c}_{q^{*}}^{B}(\omega; \alpha) \rightarrow c_{q^{*}}(\omega; \alpha)$ as $n, B \rightarrow \infty$ uniformly over $\Omega$. By Assumption~\ref{assump:uniqueness}, if we further define $f(\omega) \coloneqq \lim\limits_{n \rightarrow \infty} \mathbb{P}\left(h(\bth) \in C_{q^{*}_{\omega}}(\bX; \alpha) \right)$, then $\hat{c}_{q^{*}}^{B}(\omega; \alpha)$ converges to $f(\omega)$ uniformly over $\Omega$ as $B \rightarrow \infty$. Therefore, by Theorem 5.7 in \cite{van2000asymptotic}, the $M$-estimator $\hat{\omega}_{n}$ converges to $\omega^{*}$ as $n, B \rightarrow \infty$.
    
    Now we show that using the consistent estimator $\hat{\omega}_n$ yields the correct coverage. We want to evaluate the limit of the true coverage probability, $\mathbb{P}\left(h(\bth) \in C_{q^{*}_{\hat{\omega}_{n}}}(\bX; \alpha)\right)$. Consider the difference between this probability and our target value, $1-\alpha = f(\omega^*)$. Using the triangle inequality, we have:
    $$\left|\mathbb{P}\left(h(\bth) \in C_{q^{*}_{\hat{\omega}_{n}}}(\bX; \alpha)\right) - f(\omega^{*})\right| \leq \left|\mathbb{P}\left(h(\bth) \in C_{q^{*}_{\hat{\omega}_{n}}}(\bX; \alpha)\right) - f(\hat{\omega}_n)\right| + \left|f(\hat{\omega}_n) - f(\omega^{*})\right|. $$ Since both terms converge to 0 as $n, B \rightarrow \infty$, we complete the proof.
\end{proof}

\section*{Supplementary Material}
\addcontentsline{toc}{section}{Supplementary Materials}

The supplementary material contains additional technical details and results. Section A provides detailed derivation of Algorithm~\ref{CAVI_gmm} and Algorithm~\ref{CAVI_bmlr}. Section B presents additional simulation studies exploring the comparison between TVB and GPC methods. Code implementing the proposed method, along with illustrative numerical examples and scripts to reproduce selected experiments, is publicly available at
\url{https://github.com/JmL130169/Trustworthy-Variational-Bayes}.

\bibliographystyle{apalike}
\bibliography{reference.bib}

\newpage
\begin{center}
{\large\bfseries Supplementary Material of ``Bend to Mend: Toward Trustworthy Variational Bayes with Valid Uncertainty Quantification''}
\end{center}

This supplementary material provides additional technical details to support the main text. Section~\ref{sec:supp1} contains the derivation of Algorithm~\ref{CAVI_gmm} and Algorithm~\ref{CAVI_bmlr}, and Section~\ref{sec:supp2} presents comparative study between TVB and GPC methods.

\appendix

\section{Derivation of Algorithms}
\label{sec:supp1}
In Appendix \ref{sec:supp1}, we present the complete derivation of Algorithms \ref{CAVI_gmm} and \ref{CAVI_bmlr}, which extend the original variational Bayes approach to fractional variational Bayes. Throughout this section, we use $C$ to denote a generic constant, whose value may change from line to line.

\subsection{Derivation of Algorithm \ref{CAVI_gmm}}
Recall that the general updating rule for mean-field variational Bayes is presented in Equation \eqref{CAVI}. Consequently, based on the model specification in \eqref{prior_gmm}, we derive the following updates.

\begin{itemize}[leftmargin=*]
    \item \textbf{Update of $q(\bZ)$: }

    Utilizing Equations \eqref{CAVI}, \eqref{joint_gmm} and \eqref{elbo_gmm}, we obtain
    \begin{equation}\label{q_z}
    \begin{aligned}
        \omega \log q^{*}\left(\bZ\right) &= \mathbb{E}_{\bpi, \bmu, \bLam}\left\{\log p\left(\bX, \bZ, \bpi, \bmu, \bLam\right)\right\} + C \\
        &= \mathbb{E}_{\bpi, \bmu, \bLam}\left\{\log p^{\omega}\left(\bX | \bZ, \bmu, \bLam\right) + \log p^{\omega}\left(\bZ | \bpi\right) + \log p\left(\bpi\right) + \log p\left(\bmu | \bLam\right) + \log p\left(\bLam\right)\right\} + C \\
        &= \mathbb{E}_{\bmu, \bLam}\left\{\log p^{\omega}\left(\bX | \bZ, \bmu, \bLam\right)\right\} + \mathbb{E}_{\bpi}\left\{\log p^{\omega}\left(\bZ | \bpi\right)\right\} + C \\
        &= \mathbb{E}_{\bmu, \bLam}\left\{\omega \sum_{n = 1}^{N}\sum_{k = 1}^{K} z_{nk} \log N\left(\bx_{n} | \bmu_{k}, \bLam_{k}^{-1} \right)\right\} + \mathbb{E}_{\bpi}\left\{\omega\sum_{n = 1}^{N}\sum_{k = 1}^{K} z_{nk} \log \pi_{k} \right\} + C \\
        &= \omega\sum_{n = 1}^{N}\sum_{k = 1}^{K} z_{nk} \underbrace{\left\{\mathbb{E}_{\bpi}\left[\log \pi_{k}\right] + \dfrac{1}{2}\mathbb{E}_{\bLam}\left[\log |\bLam_{k}|\right] - \dfrac{1}{2}\mathbb{E}_{\bmu, \bLam}\left[(\bx_{n} - \bmu_{k})^{\prime}\bLam_{k}(\bx_{n} - \bmu_{k})\right]\right\}}_{\log \rho_{nk}} + C.
    \end{aligned}
    \end{equation}
    Thus, it follows that $q^{*}(\bZ) \propto \prod\limits_{n = 1}^{N}\prod\limits_{k = 1}^{K} \rho_{nk}^{z_{nk}}$, and by normalizing $\rho_{nk}$ as $r_{nk} = \dfrac{\rho_{nk}}{\sum\limits_{j = 1}^{K} \rho_{nj}}$, we have $q^{*}(\bZ) = \prod\limits_{n = 1}^{N}\prod\limits_{k = 1}^{K} r_{nk}^{z_{nk}}$. From equation \eqref{q_z}, updating $q^{*}(\bZ)$ requires the expectations of other unknown parameters $\bpi, \bmu, \bLam$. In practice, we therefore need to estimate the variational approximation of these posterior distributions iteratively. Although the smoothing parameter $\omega$ does not affect the update of $q^{*}(\bZ)$ due to the power of $q(\bZ)$ in \eqref{elbo_gmm}, it remains crucial when evaluating the corresponding Evidence Lower Bound (ELBO).

    \item \textbf{Update of $q\left(\bpi, \bmu, \bLam\right)$: }

    Similarly, the log-posterior for the remaining parameters is given by
    \begin{equation}\label{q_other}
    \begin{aligned}
        \log q^{*}\left(\bpi, \bmu, \bLam\right) &= \mathbb{E}_{\bZ}\left\{\log p\left(\bX, \bZ, \bpi, \bmu, \bLam\right)\right\} + C \\
        &= \mathbb{E}_{\bZ}\left\{\log p^{\omega}\left(\bX | \bZ, \bmu, \bLam\right) + \log p^{\omega}\left(\bZ | \bpi\right) + \log p\left(\bpi\right) + \log p\left(\bmu | \bLam\right) + \log p\left(\bLam\right)\right\} + C \\
        &= \underbrace{\mathbb{E}_{\bZ}\left\{\log p^{\omega}\left(\bZ | \bpi\right)\right\} + \log p(\bpi)}_{\text{for $\bpi$}} + \underbrace{\mathbb{E}_{\bZ}\left\{\log p^{\omega}\left(\bX | \bZ, \bmu, \bLam\right)\right\} + \sum_{k = 1}^{K}\log p\left(\bmu_{k}, \bLam_{k}\right)}_{\text{for $(\bmu, \bLam)$}} + C.
    \end{aligned}
    \end{equation}
    From equation \eqref{q_other}, the variational approximation of $\bpi, \bmu, \bLam$ can be factorized as the variational approximation of $\bpi$ and the variational approximation of $(\bmu, \bLam)$. First, for $q^{*}(\bpi)$, we have 
    \begin{equation}
    \begin{aligned}
        \log q^{*}\left(\bpi\right) &= \mathbb{E}_{\bZ}\left\{\log p^{\omega}\left(\bZ | \bpi\right)\right\} + \log p(\bpi) + C \\
        &= \log \prod_{k = 1}^{K} \pi_{k}^{\alpha_{0} - 1} + \mathbb{E}_{\bZ}\left\{\omega\sum_{n = 1}^{N}\sum_{k = 1}^{K} z_{nk} \log \pi_{k}\right\} + C \\
        &= (\alpha_{0} - 1)\sum_{k = 1}^{K} \log \pi_{k} + \omega\sum_{n = 1}^{N}\sum_{k = 1}^{K}r_{nk} \log \pi_{k} + C, 
    \end{aligned}
    \end{equation}
    which yields 
    \begin{equation}\label{q_pi}
        \quad q^{*}(\bpi) = \prod_{k = 1}^{K} \pi_{k}^{\alpha_{0} - 1 + \omega\sum\limits_{n = 1}^{N}r_{nk}} + C.
    \end{equation}
    Therefore, $$q^{*}(\bpi) \sim \mathrm{Dir}(\bpi | \bal),$$
    where $\alpha_{k} = \alpha_{0} + \omega \cdot N_{k}$ with $N_k = \sum\limits_{n = 1}^{N}r_{nk}$.

    Next, to update the variational approxmiation $q^{*}(\bmu, \bLam)$ (or $q^{*}(\bmu_{k}, \bLam_{k})$), we have
    \begin{equation}
    \begin{aligned}
        \log q^{*}\left(\bmu, \bLam\right) &= \mathbb{E}_{\bZ}\left\{\log p^{\omega}\left(\bX | \bZ, \bmu, \bLam\right)\right\} + \sum_{k = 1}^{K}\log p\left(\bmu_{k}, \bLam_{k}\right) + C \\
        &= \sum_{k = 1}^{K}\sum_{n = 1}^{N} \mathbb{E}_{\bZ}\left\{z_{nk}\right\}\left(\dfrac{\omega}{2}\log |\bLam_{k}| - \dfrac{\omega}{2}(\bx_{n} - \bmu_{k})^{\prime} \bLam_{k} (\bx_{n} - \bmu_{k})\right) \\
        &\quad + \sum_{k = 1}^{K}\log p\left(\bmu_{k} | \bLam_{k}\right) + \sum_{k = 1}^{K} p\left(\bLam_{k}\right) + C \\
        &= \sum_{k = 1}^{K}\left\{\sum_{n = 1}^{N} \mathbb{E}_{\bZ}\left\{z_{nk}\right\} \left(\dfrac{\omega}{2}\log |\bLam_{k}| - \dfrac{\omega}{2}(\bx_{n} - \bmu_{k})^{\prime} \bLam_{k} (\bx_{n} - \bmu_{k})\right)\right.\\
        & \left.\quad\quad\quad + \left(\dfrac{1}{2} \log|\bLam_{k}| - \dfrac{1}{2}(\bmu_{k} - \bm_{0})^{\prime} \beta_{0}\bLam_{k} (\bmu_{k} - \bm_{0})\right) \right.\\
        & \left.\quad\quad\quad + \left(\dfrac{1}{2}(\nu_{0} - p - 1) \log |\bLam_{k}| - \dfrac{1}{2}\mathrm{tr}(\bW_{0}^{-1}\bLam_{k})\right) + C\right\},
    \end{aligned}
    \end{equation}
    and
    \begin{equation}\label{q_ml}
    \begin{aligned}
        \log q^{*}\left(\bmu_{k}, \bLam_{k}\right) &= \sum_{n = 1}^{N} \mathbb{E}_{\bZ}\left\{z_{nk}\right\} \left(\dfrac{\omega}{2}\log |\bLam_{k}| - \dfrac{\omega}{2}(\bx_{n} - \bmu_{k})^{\prime} \bLam_{k} (\bx_{n} - \bmu_{k})\right)\\
        & \quad + \left(\dfrac{1}{2} \log|\bLam_{k}| - \dfrac{1}{2}(\bmu_{k} - \bm_{0})^{\prime} \beta_{0}\bLam_{k} (\bmu_{k} - \bm_{0})\right) \\
        & \quad + \left(\dfrac{1}{2}(\nu_{0} - p - 1) \log |\bLam_{k}| - \dfrac{1}{2}\mathrm{tr}(\bW_{0}^{-1}\bLam_{k})\right) + C.
    \end{aligned}
    \end{equation}
    For $q^{*}\left(\bmu_{k}, \bLam_{k}\right)$, by equation \eqref{q_ml} and letting $\bar{\bx}_{k} = \dfrac{1}{N_{k}}\sum\limits_{n = 1}^{N}r_{nk}\bx_{n}, S_{k} = \dfrac{1}{N_{k}}\sum\limits_{n = 1}^{N} r_{nk}(\bx_{n} - \bar{\bx}_{k})(\bx_{n} - \bar{\bx}_{k})^{\prime}$, we have 
    \begin{equation}\label{q_mu}
        \log q^{*}\left(\bmu_{k} | \bLam_{k}\right) = -\dfrac{1}{2} \bmu_{k}^{\prime}\left(\beta_{0} + \omega N_{k}\right) \bLam_{k}\bmu_{k} + \bmu_{k}^{\prime} \bLam_{k}\left(\beta_{0}\bm_{0} + \omega N_{k}\bar{\bx}_{k}\right) + C.
    \end{equation}
    So $q^{*}\left(\bmu_{k} | \bLam_{k}\right) \sim N\left(\bmu_{k} | \bm_{k}, (\beta_{k}\bLam_{k})^{-1}\right)$, where the updated hyperparameters can be obtained via equation \eqref{q_mu} as $\beta_{k} = \beta_{0} + \omega \cdot N_{k}$ and $\bm_{k} = \dfrac{1}{\beta_{k}}\left(\beta_{0}\bm_{0} + \omega \cdot N_{k}\bar{\bx}_{k}\right)$. Finally, for $q^{*}\left(\bLam_{k}\right)$, by equations \eqref{q_ml} and \eqref{q_mu} as well as the fact that $\log q^{*}\left(\bLam_{k}\right) = \log q^{*}\left(\bmu_{k}, \bLam_{k}\right) - \log q^{*}\left(\bmu_{k} | \bLam_{k}\right)$, we have
    \begin{equation}\label{q_lam}
    \begin{aligned}
        \log q^{*}\left(\bLam_{k}\right) &= \dfrac{1}{2}(\bmu_{k} - \bm_{0})^{\prime}\beta_{0}\bLam_{k}(\bmu_{k} - \bm_{0}) + \dfrac{1}{2}\log |\bLam_{k}| - \dfrac{1}{2}\mathrm{tr}\left(\bW_{0}^{-1}\bLam_{k}\right)\\
        & \quad + \dfrac{1}{2}(\nu_{0} - p - 1) \log |\bLam_{k}| - \dfrac{1}{2}\sum_{n = 1}^{N} \omega r_{nk} (\bx_{n} - \bmu_{k})^{\prime}\bLam_{k}(\bx_{n} - \bmu_{k}) \\
        & \quad + \dfrac{1}{2}\sum_{n = 1}^{N} \omega r_{nk} \log |\bLam_{k}| + \dfrac{\beta_{k}}{2}(\bmu_{k} - \bm_{k})^{\prime}\bLam_{k}(\bmu_{k} - \bm_{k}) - \dfrac{1}{2} \log |\bLam_{k}| + C \\
        &= \dfrac{1}{2}\mathrm{tr}\bigg\{\bigg(\bW_{0}^{-1} + \beta_{0}(\bmu_{k} - \bm_{0})(\bmu_{k} - \bm_{0})^{\prime} + \sum_{n = 1}^{N} \omega r_{nk} (\bx_{n} - \bmu_{k})(\bx_{n} - \bmu_{k})^{\prime}\\
        &\quad\quad\quad - \beta_{k}(\bmu_{k} - \bm_{k})(\bmu_{k} - \bm_{k})^{\prime}\bigg) \bLam_{k} \bigg\} + \dfrac{1}{2}\left(\nu_{0} + \sum_{n = 1}^{N}\omega r_{nk} - p - 1\right)\log |\bLam_{k}| + C.
    \end{aligned}
    \end{equation}
    Therefore, the variational approximation $q^{*}\left(\bLam_{k}\right) \sim \mathcal{W}\left(\bLam_{k} | \bW_{k}, \nu_{k}\right)$, where
    \begin{equation}
    \begin{aligned}
        & \nu_{k} = \nu_{0} + \omega \cdot \sum_{n = 1}^{N} r_{nk}\\
        & \bW_{k} = \left(\bW_{0}^{-1} + \omega \cdot N_{k} \bS_{k} + \dfrac{\beta_{0}N_{k}}{\beta_{0} + N_{k}}(\bar{\bx}_{k} - \bm_{0})(\bar{\bx}_{k} - \bm_{0})^{\prime}\right)^{-1}.
    \end{aligned}
    \end{equation}

    \item \textbf{ELBO of GMM:}

    First, for \eqref{q_z}, according to \cite{bishop2006pattern}, we have
    \begin{equation}\label{elbo:gmm_exp}
    \begin{aligned}
        & \mathbb{E}_{\bpi}\left[\log \pi_{k}\right] = \psi(\alpha_{k}) - \psi\left(\sum_{k = 1}^{K} \alpha_{k}\right)\\
        & \mathbb{E}_{\bLam}\left[\log |\bLam_{k}|\right] = \sum_{j = 1}^{p} \psi\left(\dfrac{\nu_{k} + 1 - j}{2}\right) + p \log 2 + \log |\bW_{k}|\\
        & \mathbb{E}_{\bmu, \bLam}\left[(\bx_{n} - \bmu_{k})^{\prime}\bLam_{k}(\bx_{n} - \bmu_{k})\right] = p \beta_{k}^{-1} + \nu_{k}(\bx_{n} - \bm_{k})^{\prime}\bW_{k}(\bx_{n} - \bm_{k}),
    \end{aligned}
    \end{equation}
    where $\psi(\cdot)$ in \eqref{elbo:gmm_exp} is digamma function. Then, the ELBO of this model defined in \eqref{elbo_gmm} can be evaluated by
    \begin{equation}\label{elbo:gmm_eva}
    \begin{aligned}
        \mathrm{ELBO} &= \int_{\Theta} q(\bpi, \bmu, \bLam)q(\bZ) \log \left(\dfrac{p\left(\bX, \bpi, \bZ, \bmu, \bLam\right)}{q(\bpi, \bmu, \bLam)q^{\omega}(\bZ)}\right) d\bth\\
        &= \mathbb{E}_{q}\{\log p^{\omega}\left(\bX | \bZ, \bmu, \bLam\right) + \log p^{\omega}\left(\bZ | \bpi\right) + \log p\left(\bpi\right) + \log p\left(\bmu | \bLam\right) + \log p\left(\bLam\right) -\\
        & \quad \log q\left(\bpi\right) - \log q\left(\bmu, \bLam\right) - \log q^{\omega}\left(\bZ\right)\}.  
    \end{aligned}
    \end{equation}
    For each term in \eqref{elbo:gmm_eva}, we have
    \begin{equation}\label{elbo:gmm_all}
    \begin{aligned}
         \mathbb{E}\left\{\log p^{\omega}\left(\bX | \bZ, \bmu, \bLam\right)\right\} &= \dfrac{\omega}{2}\sum_{k = 1}^{K} N_{k}\bigg(\mathbb{E}\left[\log |\bLam_{k}|\right] - p \beta_{k}^{-1} - \nu_{k}\mathrm{Tr}(\bS_{k}\bW_{k}) - \\
        & \quad \nu_{k}(\bar{\bx}_{k} - \bm_{k})^{\prime}\bW_{k}(\bar{\bx}_{k} - \bm_{k}) - p\log(2\pi)\bigg)\\
        \mathbb{E}\left[\log p^{\omega}\left(\bZ | \bpi\right)\right] &= \omega \sum_{n = 1}^{N}\sum_{k = 1}^{K}r_{nk}\mathbb{E}\left[\log \pi_{k}\right]\\
        \mathbb{E}\left[\log p\left(\bpi\right)\right] &= \log \Gamma(K \alpha_{0}) - K\log \Gamma(\alpha_{0}) + (\alpha_{0} - 1)\sum_{k = 1}^{K} \mathbb{E}\left[\log \pi_{k}\right]\\
        \mathbb{E}\left[\log q^{\omega}\left(\bZ\right)\right] &= \omega \sum_{n = 1}^{N}\sum_{k = 1}^{K} r_{nk}\log r_{nk}\\
        \mathbb{E}\left[\log q\left(\bpi\right)\right] &= \log \Gamma\left(\sum_{k = 1}^{K} \alpha_{k}\right) - \sum_{k = 1}^{K}\log \Gamma(\alpha_{k}) + \sum_{k = 1}^{K} (\alpha_{k} - 1)\mathbb{E}\left[\log \pi_{k}\right]\\
        \mathbb{E}\left[\log p\left(\bmu, \bLam\right)\right] &= \dfrac{1}{2}\sum_{k = 1}^{K}\left\{p \log \dfrac{\beta_{0}}{2\pi} + \mathbb{E}\left[\log |\bLam_{k}|\right] - \dfrac{p\beta_{0}}{\beta_{k}} - \beta_{0}\nu_{k}(\bm_{k} - \bm_{0})^{\prime} \bW_{k} (\bm_{k} - \bm_{0})\right\}\\
        & \quad + \dfrac{\nu_{0} - p - 1}{2}\sum_{k = 1}^{K}\mathbb{E}\left[\log |\bLam_{k}|\right] - \dfrac{1}{2}\sum_{k = 1}^{K} \nu_{k} \mathrm{Tr}\left(\bW_{0}^{-1}\bW_{k}\right)\\
        & \quad + K\log |\bW_{0}|^{-\frac{\nu_{0}}{2}} \left(2^{\frac{\nu_{0}p}{2}} \pi^{\frac{p(p-1)}{4}} \prod_{j = 1}^{p} \Gamma\left(\dfrac{\nu_{0} - j + 1}{2}\right)\right)^{-1}\\
        \mathbb{E}\left[\log q\left(\bmu, \bLam\right)\right] &= \sum_{k = 1}^{K}\bigg\{\dfrac{1}{2} \mathbb{E}\left[\log |\bLam_{k}|\right] + \dfrac{p}{2} \log \dfrac{\beta_{k}}{2\pi} - \dfrac{p}{2} + \dfrac{\nu_{k} - p - 1}{2}  \mathbb{E}\left[\log |\bLam_{k}|\right] - \dfrac{\nu_{k}p}{2} \\
        & \quad + \log |\bW_{k}|^{-\frac{\nu_{k}}{2}} \left(2^{\frac{\nu_{k}p}{2}} \pi^{\frac{p(p-1)}{4}} \prod_{j = 1}^{p} \Gamma\left(\dfrac{\nu_{k} - j + 1}{2}\right)\right)^{-1}\bigg\}.
    \end{aligned}
    \end{equation}
    This completes the derivation of Algorithm \ref{CAVI_gmm}.
\end{itemize}

\subsection{Derivation of Algorithm \ref{CAVI_bmlr}}

For Bayesian mixture linear regression (BMLR) modeled in \eqref{CAVI_bmlr}, we update each term of $q$ in the mean-field family based on updating rule \eqref{CAVI} as follows. 

\begin{itemize}[leftmargin=*]
    \item \textbf{Update of $q(\bZ)$:} According to \eqref{CAVI}, \eqref{joint_bmlr} and \eqref{elbo_bmlr}, we have 
    \begin{equation}\label{bmlr:q_z}
    \begin{aligned}
        \omega \log q^{*}(\bZ) &= \mathbb{E}_{\bpi, \bet, \btau}\left[\log p\left(\bY, \bZ, \bpi, \bet, \btau | \mathcal{X}\right)\right] + C\\
        &= \mathbb{E}_{\bet}\left[\omega \log p\left(\bY | \bZ, \bet, \mathcal{X}\right)\right] + \mathbb{E}_{\bpi}\left[\omega \log p\left(\bZ | \bpi\right)\right] + C\\
        &= \omega\sum_{n = 1}^{N}\sum_{k = 1}^{K}z_{nk}\mathbb{E}_{\bet_{k}}\left[\log N\left(\by_{n} | \bX_{n}\bet_{k}, \lambda^{-1}\bI_{n}\right)\right] + \omega\sum_{n = 1}^{N}\sum_{k = 1}^{K}z_{nk}\mathbb{E}_{\pi_{k}}\left[\log \pi_{k}\right] + C\\
        &= \omega\sum_{n = 1}^{N}\sum_{k = 1}^{K}z_{nk} \underbrace{\left\{\mathbb{E}\left[-\dfrac{\lambda}{2} (\by_{n} - \bX_{n}\bet_{k})^{\prime}(\by_{n} - \bX_{n}\bet_{k})\right] + \mathbb{E}\left[\log \pi_{k}\right]\right\}}_{\log\rho_{nk}} + C. 
    \end{aligned}
    \end{equation}
    Therefore, we have $q^{*}(\bZ) \propto \prod\limits_{n = 1}^{N}\prod\limits_{k = 1}^{K} \rho_{nk}^{z_{nk}}$, and by normalizing $\rho_{nk}$ as $r_{nk} = \dfrac{\rho_{nk}}{\sum\limits_{j = 1}^{K} \rho_{nj}}$, we have $q^{*}(\bZ) = \prod\limits_{n = 1}^{N}\prod\limits_{k = 1}^{K} r_{nk}^{z_{nk}}$.

    \item \textbf{Update of $\log q^{*}(\btau)$:} Similarly, by \eqref{CAVI}, we have
    \begin{equation}\label{bmlr:q_tau}
    \begin{aligned}
        \log q^{*}(\btau) &= \mathbb{E}_{\bZ, \bpi, \bet}\left[\log p\left(\bY, \bZ, \bpi, \bet, \btau | \mathcal{X}\right)\right] + C\\
        &= \log p(\btau) + \mathbb{E}_{\bet}\left[\log p\left(\bet | \btau\right)\right] + C\\
        &= \sum_{k = 1}^{K}\left\{\mathbb{E}_{\bet_{k}}\left[\log p\left(\bet_{k} | \tau_{k}\right)\right] + \log p\left(\tau_{k}\right)\right\} + C.
    \end{aligned}
    \end{equation}
    Then, we have
    \begin{equation}
        \log q^{*}(\tau_{k}) = \left(a_{0} + \dfrac{p}{2} - 1\right)\log \tau_{k} - \left(b_{0} + \dfrac{1}{2}\mathbb{E}\left[\bet_{k}^{\prime}\bet_{k}\right]\right)\tau_{k},
    \end{equation}
    i.e. $q^{*}(\tau_{k}) = \mathrm{Gamma}(\tau_{k} | a_{k}, b_{k})$, where $a_{k} = a_{0} + \dfrac{p}{2}$ and $b_{k} = b_{0} + \dfrac{1}{2}\mathbb{E}[\bet_{k}^{\prime}\bet_{k}]$.

    \item \textbf{Update of $q^{*}\left(\bpi, \bet\right)$:} For $(\bpi, \bet)$, we have
    \begin{equation}\label{bmlr:q_pi_beta}
    \begin{aligned}
        \log q^{*}\left(\bpi, \bet\right) &= \mathbb{E}_{\bZ, \btau}\left[\log p\left(\bY, \bZ, \bpi, \bet, \btau | \mathcal{X}\right)\right] + C\\
        &= \underbrace{\omega \mathbb{E}_{\bZ}\left[\log p\left(\bY | \bZ, \bet, \mathcal{X}\right)\right] + \mathbb{E}_{\btau}\left[\log p\left(\bet | \btau\right)\right]}_{\text{for } \bet} + \underbrace{\omega\mathbb{E}_{\bZ}\left[\log p\left(\bZ | \bpi\right)\right] + \log p(\bpi)}_{\text{for } \bpi} + C.
    \end{aligned}
    \end{equation}
    Collecting the terms that depend on $\bpi$, we obtain
    \begin{equation}\label{bmlr:q_pi}
    \begin{aligned}
        \log q^{*}(\bpi) &= \omega\mathbb{E}_{\bZ}\left[\log p\left(\bZ | \bpi\right)\right] + \log p(\bpi) + C\\
        &= \sum_{k = 1}^{K}\log \pi_{k}^{\alpha_{0} - 1} + \omega\sum_{k = 1}^{K}\sum_{n = 1}^{N}r_{nk}\log \pi_{k} + C\\
            &= \sum_{k=1}^{K}\Bigl(\alpha_{0} - 1 + \omega\sum_{n=1}^{N}r_{nk}\Bigr)\log \pi_{k} + C.
                \end{aligned}
    \end{equation}
     This implies that $q^{*}(\bpi)$ follows a Dirichlet distribution
     $$q^{*}(\bpi) = \mathrm{Dir}(\bpi | \bal), \quad \text{where } \alpha_{k} = \alpha_{0} + \omega\sum\limits_{n = 1}^{N}r_{nk}.$$

    For $q^{*}(\bet_{k})$, we similarly collect the terms that depend on $\bet_k$:
    \begin{equation}\label{bmlr:q_beta}
    \begin{aligned}
        \log q^{*}(\bet_{k}) &= \omega \mathbb{E}_{\bZ}\left[\log p\left(\bY | \bZ, \bet, \mathcal{X}\right)\right] + \mathbb{E}_{\btau}\left[\log p\left(\bet | \btau\right)\right] + C\\
        &= \omega\sum_{n = 1}^{N}r_{nk}\left\{-\dfrac{\lambda}{2} (\by_{n} - \bX_{n}\bet_{k})^{\prime}(\by_{n} - \bX_{n}\bet_{k})\right\} - \dfrac{1}{2}\mathbb{E}_{\tau_{k}}\left[\tau_{k}\right]\bet_{k}^{\prime}\bet_{k} + C\\
        &= \lambda\bet_{k}^{\prime}\omega\sum_{n = 1}^{N}r_{nk}\bX_{n}^{\prime}\by_{n} - \dfrac{1}{2}\bet_{k}^{\prime}\left\{\mathbb{E}_{\tau_{k}}[\tau_{k}]\bI_{p} + \lambda\omega\sum_{n = 1}^{N}r_{nk}\bX_{n}^{\prime}\bX_{n}\right\}\bet_{k} + C.
            \end{aligned}
    \end{equation}
    Thus, 
         $q^{*}(\bet_{k}) = N\left(\bet_{k} | \bm_{k}, \bS_{k}\right),$ where $$\left\{\begin{array}{l} \bS_{k}^{-1} = \left(\mathbb{E}_{\tau_{k}}[\tau_{k}]\bI_{p} + \omega \cdot \lambda\sum\limits_{n = 1}^{N}r_{nk}\bX_{n}^{\prime}\bX_{n}\right)^{-1}\\
        \bm_{k} = \omega \cdot \lambda\bS_{k}\sum\limits_{n = 1}^{N}r_{nk}\bX_{n}^{\prime}\by_{n}\end{array}\right..$$

    \item \textbf{ELBO of BMLR:} Finally, we provide the evidence lower bound of Bayesian mixture linear regression which is formulated as \eqref{elbo_bmlr}. According to \eqref{elbo_bmlr}, we have
    \begin{equation}\label{elbo:bmlr_eva}
    \begin{aligned}
        \mathrm{ELBO} &= \int_{\Theta} q(\bZ)q(\bpi, \bet)q(\btau) \log \left(\dfrac{p\left(\bY, \bZ, \bpi, \bet, \btau | \mathcal{X}\right)}{q(\bpi, \bet)q(\btau)q^{\omega}(\bZ)}\right) d\bth\\
        &= \mathbb{E}_{q}\left[\log p^{\omega}\left(\bY | \bZ, \bet, \mathcal{X}\right) + \log p^{\omega}\left(\bZ | \bpi\right) + \log p\left(\bpi\right) + \log p\left(\bet | \btau\right) + \log p\left(\btau\right)\right.\\
        &\left.\quad - \log q\left(\bpi, \bet\right) - \log q\left(\btau\right) - \log q^{\omega}\left(\bZ\right)\right].
    \end{aligned}
    \end{equation}
    Each term in equation \eqref{elbo:bmlr_eva} is
    \begin{equation}\label{elbo:bmlr_all}
    \begin{aligned}
        \mathbb{E}\left[\log p^{\omega}\left(\bY | \bZ, \bet, \mathcal{X}\right)\right] &= \omega\sum_{n = 1}^{N}\sum_{k = 1}^{K}r_{nk}\left\{-\dfrac{1}{2}\log 2\pi\lambda^{-1}\bI_{n} - \dfrac{\lambda}{2}\by_{n}^{\prime}\by_{n} + \lambda \bm_{k}^{\prime}\bX_{n}^{\prime}\by_{n}\right.\\
        &\left.\quad - \dfrac{\lambda}{2}\mathrm{Tr}\left(\bX_{n}^{\prime}\bX_{n}\left(\bm_{k}\bm_{k}^{\prime} + \bS_{k}\right)\right)\right\}\\
        \mathbb{E}\left[\log p^{\omega}\left(\bZ | \bpi\right)\right] &= \omega\sum_{n = 1}^{N}\sum_{k = 1}^{K}r_{nk}\left(\psi(\alpha_{k}) - \psi\left(\sum\alpha_{k}\right)\right)\\
        \mathbb{E}\left[\log p\left(\bpi\right)\right] &= \log \Gamma(K \alpha_{0}) - K\log \Gamma(\alpha_{0}) + (\alpha_{0} - 1)\sum_{k = 1}^{K}\left(\psi(\alpha_{k}) - \psi\left(\sum\alpha_{k}\right)\right)\\
        \mathbb{E}\left[\log p\left(\bet | \btau\right)\right] &= \sum_{k = 1}^{K}\left(-\dfrac{D}{2}\log 2\pi + \dfrac{D}{2}\left(\psi(a_{k}) - \log b_{k}\right) - \dfrac{a_{k}}{2b_{k}}\left(\bm_{k}^{\prime}\bm_{k} + \mathrm{Tr}(\bS_{k})\right)\right)\\
        \mathbb{E}\left[\log p\left(\btau\right)\right] &= \sum_{k = 1}^{K}\left(a_{0}\log b_{0} + (a_{0} - 1)\left(\psi(a_{k}) - \log b_{k}\right) - \dfrac{b_{0}a_{k}}{b_{k}} - \log \Gamma(a_{0})\right)\\
        \mathbb{E}\left[\log q^{\omega}\left(\bZ\right)\right] &= \omega\sum_{n = 1}^{N}\sum_{k = 1}^{K} r_{nk}\log r_{nk}\\
        \mathbb{E}\left[\log q\left(\bpi\right)\right] &= \log\Gamma\left(\sum\alpha_{k}\right) - \sum_{k = 1}^{K}\log\Gamma(\alpha_{k}) + \sum_{k = 1}^{K}(\alpha_{k} - 1)\left(\psi(\alpha_{k} - \psi\left(\sum\alpha_{k}\right)\right)\\
        \mathbb{E}\left[\log q\left(\bet\right)\right] &= \sum_{k = 1}^{K}\left(-\dfrac{1}{2}\log|\bS_{k}| - \dfrac{p}{2}\log(2\pi e)\right)\\
        \mathbb{E}\left[\log q\left(\bpi\right)\right] &= \sum_{k = 1}^{K}\left(-\log\Gamma(a_{k}) + \log b_{k} + (a_{k} - 1)\psi(a_{k}) - a_{k}\right).
    \end{aligned}
    \end{equation}
    This completes the derivation of Algorithm \ref{CAVI_bmlr}.
\end{itemize}

\section{Comparative Study}
\label{sec:supp2}
In Appendix \ref{sec:supp2}, we present a comparative study between General Posterior Calibration (GPC) and TVB with sequential updates for both the Gaussian mixture model (GMM) and the Bayesian mixture linear regression (BMLR) examples discussed in Section \ref{empirical}. For the GMM example, we maintain the same settings as described in Section \ref{numerical_gmm}. Similarly, for the BMLR example, we use the same setup as Section \ref{numerical_bmlr}, but with $N = 1000$ and $J_{n} = (2000, 3000, 5000)$. The results are presented below.

\begin{figure}[ht!]
    \centering
    \includegraphics[width = 14cm]{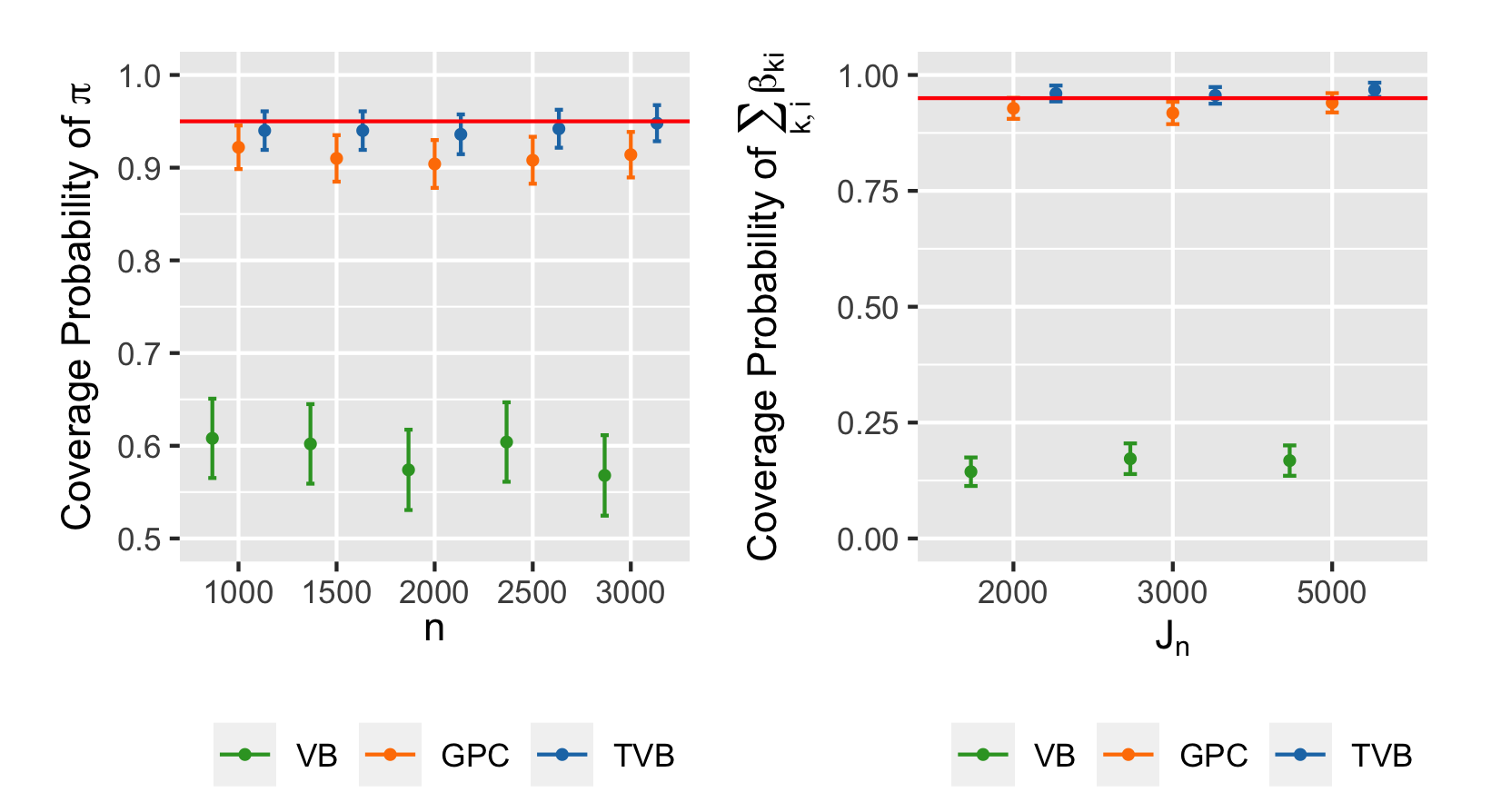}
    \caption{Frequentist coverage performance for original VB method (VB), general posterior calibration method (GPC) and trustworthy VB method (TVB) with sequential update. The left panel displays the coverage of the mixing parameter $\pi$ in the GMM example. The right panel presents the coverage of the sum of regression coefficients, $\sum\limits_{k=1}^{K}\sum\limits_{i=1}^{p}\beta_{ki}$ in the BMLR example. In each panel, dot plots with error bars indicate the average coverage and the 95\% confidence interval.}
    \label{fig:plot_cover_app}
\end{figure}

\begin{figure}[ht!]
    \centering
    \includegraphics[width = 14cm]{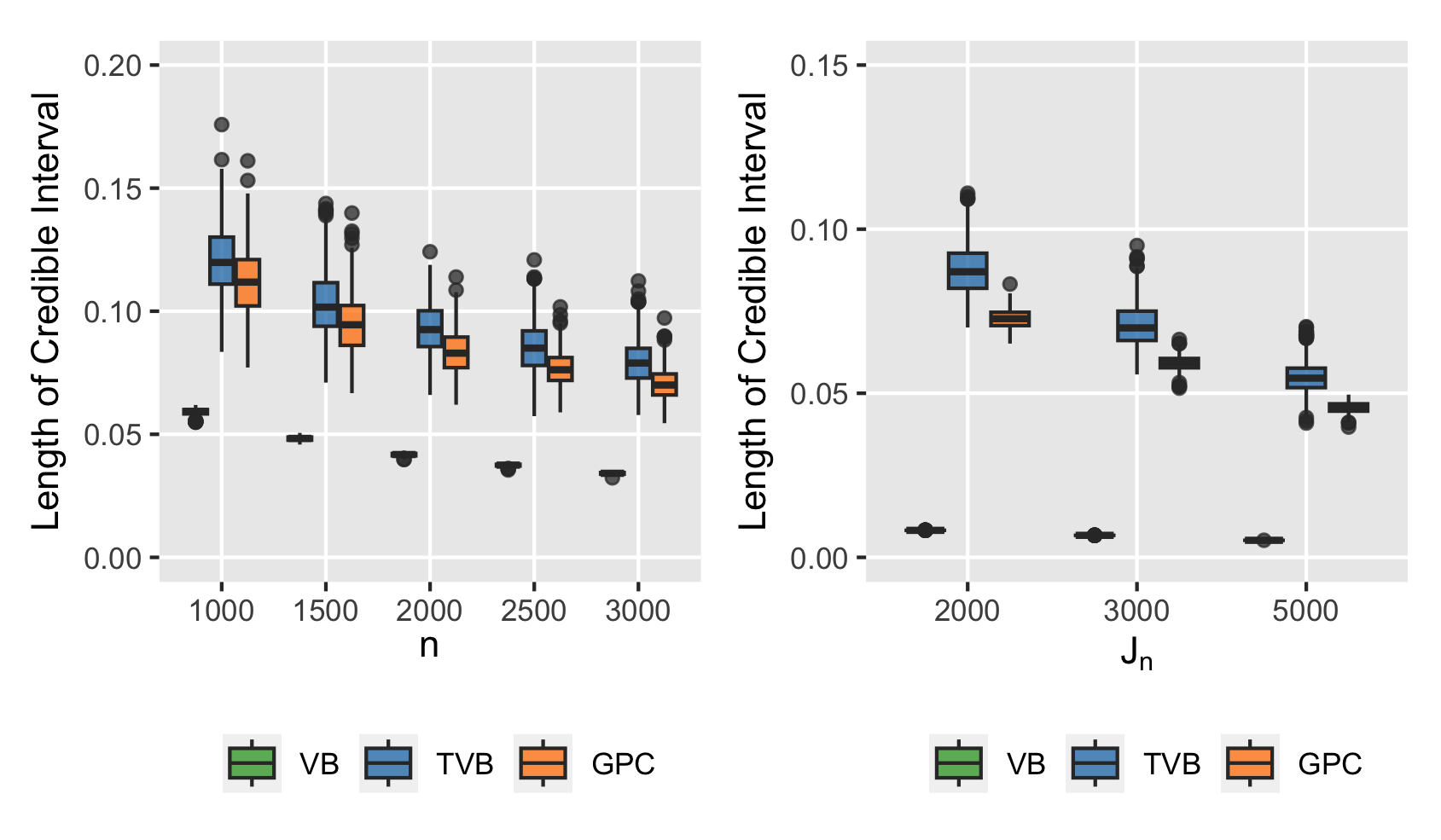}
    \caption{Boxplots of credible interval length for original VB method (VB), general posterior calibration method (GPC) and trustworthy VB method (TVB) with sequential update. The left panel is for $\pi$ in the GMM example; the right panel is for $\sum\limits_{k = 1}^{K}\sum\limits_{i = 1}^{p}\beta_{ki}$ in the BMLR example.}
    \label{fig:plot_length_app}
\end{figure}

\begin{figure}[ht!]
    \centering
    \includegraphics[width = 14cm]{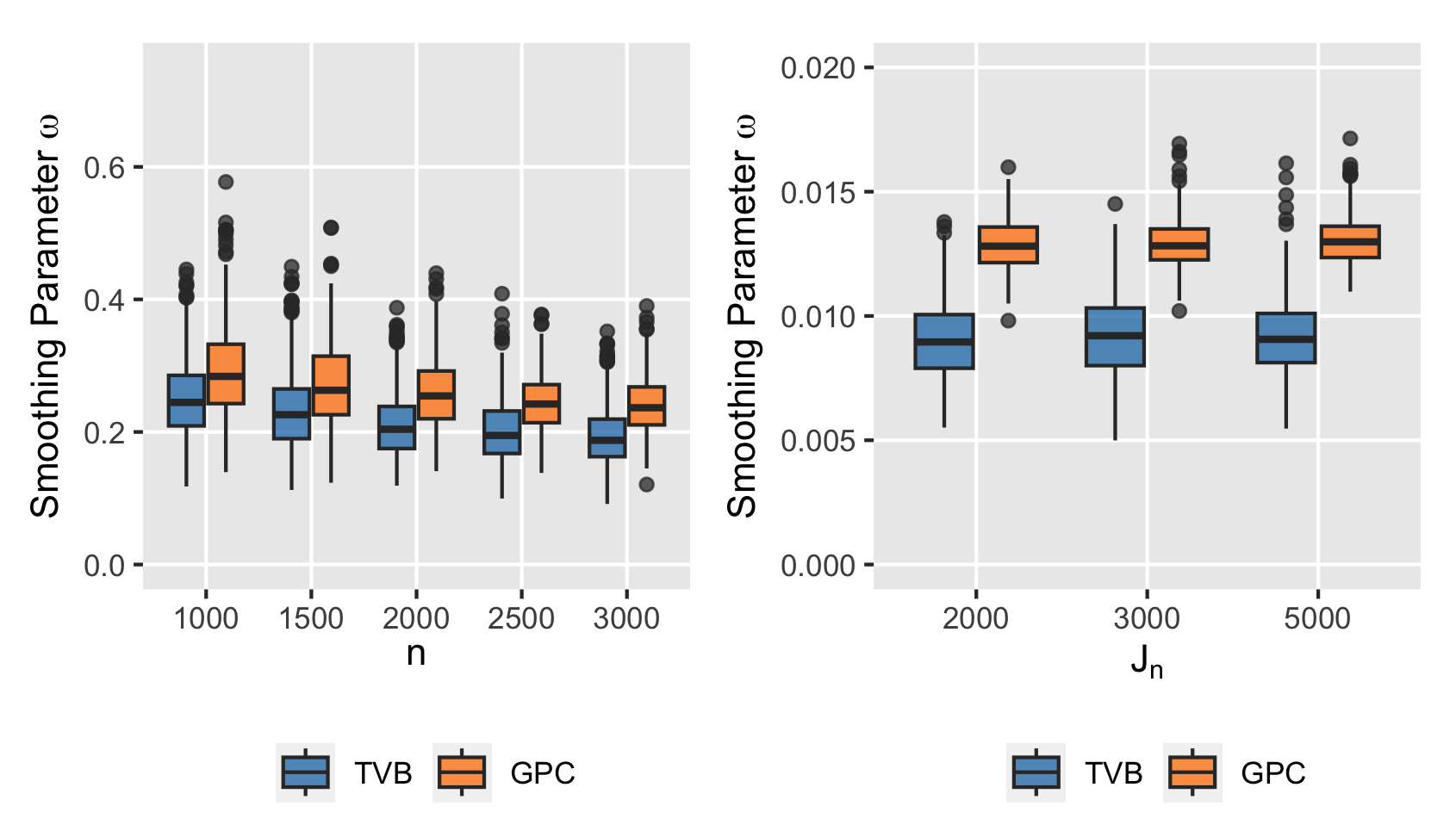}
    \caption{Boxplots of the selected smoothing parameter $\omega$ for general posterior calibration method (GPC) and trustworthy VB method (TVB) with sequential update. The left panel is for the GMM example, while the right panel is for the BMLR example.}
    \label{fig:plot_omega_app}
\end{figure}

As shown in Figure \ref{fig:plot_cover_app}, the TVB method achieves better accuracy and robustness compared to the GPC method, although GPC still improves calibration over standard VB. As expected, the TVB method produces wider credible intervals than GPC, while utilizing a smaller smoothing parameter $\omega$; see Figures \ref{fig:plot_length_app} and \ref{fig:plot_omega_app}, respectively.

\end{document}